\documentclass[a4paper,onecolumn,superscriptaddress,10.5pt,dvipsnames,svgnames,noarxiv]{quantumarticle}
\usepackage{graphicx} 
\usepackage[utf8]{inputenc}
\usepackage{amsmath}
\usepackage{amssymb}
\usepackage{braket}
\usepackage{graphicx}
\usepackage{float}
\usepackage{tikz}
\usepackage{cite}
\usepackage{amsthm}
\usepackage{amssymb}
\newtheorem{definition}{Definition}
\usepackage{xcolor}
\usepackage{amsmath}
\usepackage{pifont}
\usepackage{amsthm}
\usepackage{qcircuit}
\usepackage{cancel}
\usepackage{changes,subcaption,comment}
\usepackage{paralist}
\usepackage{hyperref}
\usepackage{cleveref}
\usepackage[most]{tcolorbox}
\usepackage{booktabs}
\usepackage{tabularx}
\usepackage{array}

\newtcolorbox{examplebox}{
    colback=magenta!5,
    colframe=magenta!50!white,
    coltitle=white,
    colbacktitle=magenta!50!white,
    fonttitle=\bfseries,
    title=Example,
    boxrule=0.8pt,
    arc=2mm,
    left=2mm,
    right=2mm,
    top=2mm,
    bottom=2mm
}

\usepackage{enumitem}
\setlist[itemize]{noitemsep}
\setlist[enumerate]{noitemsep}

\newtheorem{example}{Example}

\newcommand{\ba}{\begin{eqnarray}}
\newcommand{\ea}{\end{eqnarray}}
\newcommand{\C}[1]{\mathcal{#1}}
\newcommand{\Rprec}[1]{\xrightarrow{#1}}

\title{Causal Trade-Offs in Superluminal Extensions of Relativity. A no-go theorem}

\author{Amrapali Sen$^\dagger$}
 \email{amrapalisen.phy@gmail.com}\thanks{Corresponding author}
\affiliation{ICTQT, University of Gda\'{n}sk, Prof. Marii Janion 4, 80-309  Gda\'{n}sk, Poland}

\author{Matthias Salzger$^\dagger$}
 \email{ matthias.salzger@outlook.com}
\affiliation{ICTQT, University of Gda\'{n}sk, Prof. Marii Janion 4, 80-309  Gda\'{n}sk, Poland}

\author{V. Vilasini}
 \email{vilasini@inria.fr}
\affiliation{Univ. Grenoble Alpes, Inria, 38000 Grenoble, France}

\begin{document}
\maketitle
\section{Abstract}

Extensions of relativity that include superluminal reference frames have recently been proposed by Dragan and Ekert as a way to reconcile relativistic symmetry with quantum uncertainty. In this work, we develop a causal modelling framework to formally study this proposal and prove a no-go theorem demonstrating that a theory permitting superluminal reference frames can avoid superluminal signalling only if it severely constrains which spacetime regions are accessible to operational protocols, if the choice of reference frame has observable consequences or if free variables do not exist or their existence depends on the reference frame. Crucially, our result goes beyond the familiar intuition that superluminal transformations may lead to superluminal signalling; our no-go result shows that Dragan and Ekert's proposal of introducing objective uncertainty does not, by itself, eliminate this conflict. In fact, because superluminal signalling might instead be avoided by relaxing other fundamental assumptions in the
presence of superluminal reference frames, the proposed objective uncertainty may not be necessary at all. Our results clarify the limitations of uncertainty-based resolutions and the causal trade-offs faced by superluminal extensions of relativity.

\def\thefootnote{$\dagger$}\footnotetext{These authors contributed equally to this work}\def\thefootnote{\arabic{footnote}}

\section{Introduction}
Relativistic causality constrains the ordering of events through the light-cone structure of Minkowski spacetime. Standard special relativity is conventionally assumed to prohibit superluminal reference frames, without any assumption(s) explicitly ruling it out. Andrzej Dragan and Artur Ekert have proposed extending the relativity principle to superluminal observers, arguing that quantum indeterminacy may arise as a consistency requirement of such an extension \cite{dragan2020quantum}. Their central claim is that local-deterministic descriptions become inconsistent under superluminal boosts, and that objective uncertainty is therefore required. This raises several questions. What does a locally-deterministic description mean in this context? Are we really talking about both local and deterministic aspects of superluminal particles? What precisely is not determined: the occurrence of the event itself, its spacetime localisation, or the outcome associated with it? How large must the non-local or non-deterministic or non-(local-deterministic) regions? Do these regions prevent superluminal signalling? Must any other assumptions be relinquished for this proposal to succeed? Moreover, is such uncertainty genuinely necessary, or could superluminal signalling instead be avoided by relaxing those assumptions? These questions motivate the analysis developed below.\\

Our aim here is not to determine whether indeterminism follows from superluminal transformations (it was discussed in another paper by AS where it is argued that even if any indeterminacy is necessary, it is different from standard quantum indeterminacy  \cite{Sen2026Indeterminism}). 
In this paper, rather than focusing on (any) indeterminacy \cite{Sen2026Indeterminism} per se, and without imposing determinacy as an assumption, we revisit this idea from the perspective of causal modelling in order to shed light on the above questions through a formal framework while deconstructing the underlying assumptions. We formulate the relevant operational and causal assumptions explicitly and examine which combinations are compatible with superluminal reference frames. We examine the embedding of causal structures under superluminal transformations (hereafter denoted by SpTs).\\ 

We then prove a no-go theorem for theories admitting SpTs. Rather than relying on a particular interpretation, our result is formulated as an incompatibility between a set of independently motivated assumptions. The central tension can be summarised as follows. Correlations that support ordinary communication in one frame may be mapped by a superluminal transformation to correlations between spacelike-separated operational regions in another. Avoiding superluminal signalling then requires at least one additional assumption about what kinds of spacetime regions are operationally accessible, frame-independence of experimental observations, or free choice to fail.

In a conventional subluminal context, each of these assumptions (apart from SpTs) is natural in existing operational, relativistic, causal-modelling frameworks \cite{vilasini2022general,Pearl2009,OreshkovCostaBrukner2012, barrett2020, henson2014theory}. The theorem does not by itself single out a unique resolution, but instead characterises the trade-offs that any theory incorporating superluminal reference frames must confront. Additionally, we discuss what these trade-offs imply for the argument of \cite{dragan2020quantum}, finding that they either obviate the need for uncertainty to avoid superluminal signalling or require uncertainty many orders of magnitude larger than what is observed in the experimental world. \\ 

 The paper is organised as follows. In \cref{sec3}, we introduce superluminal Lorentz transformations. In \cref{sec5}, we introduce a communication protocol that forms the operational basis of our analysis and review the argument of Dragan and Ekert regarding superluminal reference frames and the emergence of objective uncertainty \cite{dragan2020quantum}. We analyse their reasoning in the context of a simple experimentally observed communication protocol between two agents and discuss how the argument applies to different physical realizations of the protocol. Next in \cref{sec4}, we lay down the framework necessary to discuss causal modelling in the presence of SpTs. In the framework, we generalise the affects framework of \cite{vilasini2022general} to allow the spacetime causal structure to depend on the reference frame, reflecting the fact that light-cone relations need not be preserved under SpTs. We also generalise the framework by allowing observed variables to be embedded into extended spacetime regions, rather than being associated with individual spacetime points, following the approach of \cite{Vilasini2024Embedding}. In \cref{sec6} we formalise the causal structure of the communication protocol and state the assumptions underlying our framework. Using these assumptions, we present our no-go theorem. Finally, in \cref{sec7} we discuss the consequences of relaxing each assumption, and how it relates to the argument presented in \cite{dragan2020quantum}. Then we conclude in \cref{sec7} and leave some discussion in \cref{sec8}.  

\section{Superluminal Lorentz transformations}\label{sec3}

We briefly review the derivation of Lorentz transformations \cite{parker1969} and indicate how the analysis extends to superluminal frames. For simplicity, we restrict attention only to the the $(1+1)$-dimensional case.\\

Consider two inertial reference frames $(t,x)$ and $(t',x')$, where the primed frame moves with constant velocity $V$ relative to the unprimed one. The principle of relativity implies that the transformation between these coordinate systems must be linear, since inertial motion in one frame should correspond to inertial motion in the other. Moreover, the coefficients of the transformation may depend only on the relative velocity $V$. Under these assumptions, the most general linear transformation relating the two
frames can be written as
\begin{equation}\label{eq1}
x' = A(V)x + B(V)t ,
\end{equation}
where the functions $A(V)$ and $B(V)$ depend solely on the relative velocity.
Consistency of the relativity principle further requires that the inverse
transformation has the same functional form but with the velocity reversed.
Hence one must also have
\begin{equation}
x = A(-V)x' + B(-V)t'.
\end{equation}

We can then pick an origin \(x'=0\) of the primed frame that follows the trajectory \(x=Vt\), which in turn constrains the coefficients in \cref{eq1} through \(B(V)=-VA(V)\). Consistency under a reversal of the spatial orientation, which also maps \(V\mapsto -V\), allows two possibilities: \(A(-V)=A(V)\) and \(A(-V)=-A(V)\). 
By imposing the principle of relativity and hence requiring successive transformations between inertial frames to retain the same functional form and imposing the existence of a
universal limiting speed $c$, one obtains the familiar Lorentz transformations for $|V|<c$ for the symmetric choice. The antisymmetric choice with  $|V|>c$ leads to the so-called superluminal Lorentz transformations, which take the following form:

\begin{equation}\label{LB2}
    x'=\pm\frac{V}{|V|}\frac{x-Vt}{\sqrt{V^2-1}}\\\hspace{1cm}
    t'=\pm\frac{V}{|V|}\frac{t-Vx}{\sqrt{V^2-1}}.
\end{equation}

The overall sign in these expressions cannot be fixed uniquely, since there is no limit $V \rightarrow 0$. Consequently, the choice of sign is conventional, and we adopt the positive branch whenever required.

\medskip
\noindent\textit{Note.} In the limit of infinite relative velocity,
often referred to as a \emph{superboost} \cite{Dragan2023}, the transformation formally
interchanges the temporal and spatial coordinates, such that $t'$
becomes proportional to $x$ and $x'$ to $t$.
\section{Communication protocol and Dragan and Ekert argument}\label{sec5}

To analyse the implications of superluminal reference-frame transformations for causal structure, we begin by considering a simple communication protocol between two agents, depicted in \cref{cp}.\footnote{The particular physical realization described below is not an additional assumption of our no-go theorem, nor is the theorem restricted to this protocol. Rather, it provides an intuitive example for our no-go theorem.} Consider the following set-up: Alice wishes to communicate a uniformly distributed bit $X$ to Bob whose guess for the bit we denote with $B$.
We assume the following strategy: Alice has some device which she can switch on and off. When the device is on, it can emit a particle. If it is off, it cannot. We can, for example, imagine that the device consists of a radioactive source encased by very thick lead walls such that all radiation is stopped. A gate in the walls can be opened (and closed) at the press of a button and an emitted particle can escape the device while the gate is open. If $X=0$, Alice leaves the device off. If $X=1$, Alice operates the device within a laboratory region of spatial extent $\Delta x_A$ for a time interval $\Delta t_A$. Meanwhile, Bob searches for the particle within a laboratory region of spatial extent $\Delta x_B$ for a time interval $\Delta t_B$, assigning $B=1$ if he does, and $B=0$ if he does not. The protocol need not be perfect: Bob may fail to detect an emitted particle or may register an unrelated particle.  
Clearly, such or similar devices exist in the real world and as long as Alice is in the causal past of Bob, she can communicate to him, i.e. $p(B = X) > 0.5$, since $X$ is uniformly distributed.\\

\begin{figure*}[htp]
		\centering
		\includegraphics[width= 0.9 \textwidth]{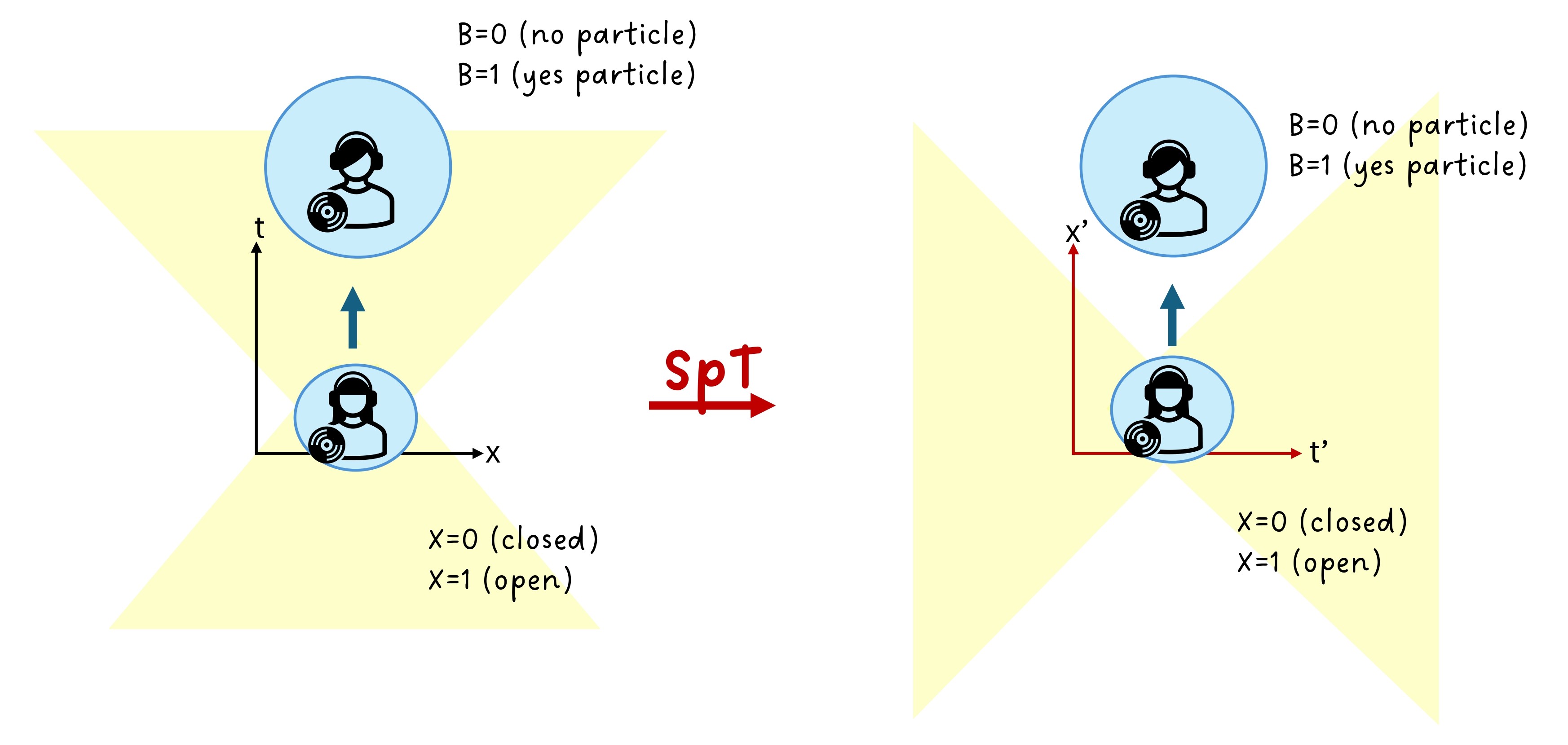}	\caption{Communication protocol. The coordinates in black represent our frame, while the coordinates in red represent a superluminally boosted reference frame (of let's say Charlie) with respect to us.}
		\label{cp}        
	\end{figure*}

However, if one includes a superluminal observer, Charlie the situation changes. In the limit of Charlie moving at infinite velocity, the roles of the time coordinate and the space coordinate switch according to \cref{LB2}. This also means that the previously timelike path between Alice and Bob becomes a spacelike path and it seems like superluminal signalling is possible from Alice to Bob in Charlie's reference frame. One could now invoke the possibility of fundamental spatiotemporal uncertainty, but the physical implementation constrains both operations: Alice can emit the particle only within a laboratory region of spatial extent $\Delta x_A$ and during a time interval $\Delta t_A$, while Bob can search for it only within a laboratory region of spatial extent $\Delta x_B$ and during a time interval $\Delta t_B$. These constraints bound the possible spacetime locations of the corresponding input and output events. If these laboratory regions are sufficiently small relative to the spacetime separation between Alice and Bob, then the two regions remain spacelike separated in Charlie’s reference frame, and the communication appears superluminal, as illustrated in \cref{cp}.\\

To analyse how this communication protocol compares with arguments in \cite{dragan2020quantum}, we illustrate the argument of Dragan and Ekert in \cref{ED}. On the left, they consider a superluminal particle emitted at an event $A$ and absorbed at an event $B$. They call the description locally deterministic if information contained locally in the source’s worldline prior to $A$ determines the exact time of emission. Under a superluminal transformation, on the right, however, the roles of the emission and absorption events may be reversed. The event $A$ whose occurrence was fixed by the source’s past in the left frame can then appear, in another frame, to depend on an event $B$ which is now the source. They conclude that a frame-independent description admitting superluminal observers cannot retain a locally deterministic account and must instead assign objective indeterminacy to the emission event.\\

It is important, however, to distinguish the locality and determinism components of this conclusion. In their discussion of the scenario depicted in \cref{ED} in \cite{dragan2020quantum}, Dragan and Ekert  postulate locality by assuming that the ``source particle prior to the event A contains locally all the information necessary to predict the exact moment of emission of a superluminal particle at $A$ ". They subsequently argue that local-\emph{determinism} fails because ``the past world-line of the particle $B$ carries no information ... the emission at $B$ was completely spontaneous and fundamentally unpredictable". The direct conclusion of this argument is that the relevant information cannot remain locally encoded in the past worldline of the source particle in every reference frame. In this sense, the argument establishes a failure of the \emph{locality} requirement imposed on the explanation. It does \emph{not} however provide a separate argument against \emph{determinism}. For example, their argument by itself does not exclude nonlocal-deterministic explanations. 
Their argument indeed shows that superluminal extensions are incompatible with the particular notion of locality adopted in \cite{dragan2020quantum}, rather than independently demonstrating that determinism must also be abandoned. In addition, they do not explicitly analyse no-superluminal signalling for tachyons, but remark in passing that ``If we had a source of superluminal particles at our disposal, we would not be able to use it to send any information". This statement appears to assume that agents lack sufficient control over the relevant emission and absorption events. However, such a restriction on operational control is neither formulated precisely nor shown to follow from the abandonment of \emph{local-determinism} or from the principle of relativity. \\

Furthermore, based on their apparent violation of local determinism they suggest there is fundamental uncertainty. However, this is not established formally via a rigorous framework and definitions of concepts such as local determinism, uncertainty, or signalling \cite{dragan2020quantum}. The uncertainty they refer to is in the space-time localisation of the emission or absorption of the particle, which means that the corresponding event cannot be associated with a sharply localised spacetime point and must instead be represented by an extended spacetime region. This motivates our formal treatment of space-time uncertainty. It is represented by the extended blue regions in our \cref{cp}, \cref{ED}, \cref{swap2} in this section.
We refer to this space-time uncertainty as D-E uncertainty. The notions of locality and determinism are familiar from discussions of Bell’s theorem, where spacetime based distinctions of several physical assumptions are often revisited\cite{RevModPhys.92.021002}. However, space-time uncertainties are not necessary in the context of Bell's theorem since a Bell inequality can be violated using entangled spins, which could in principle be perfectly localised in spacetime. Hence, D-E uncertainty is very different from the local-determinism assumption when it comes to Bell's theorem, but similar in sense to local-determinism in \cite{dragan2020quantum}.\\

The argument of Dragan and Ekert refers informally to agents having access to the histories of the particles at $A$ and $B$, but it does not formulate those access conditions as an operational protocol. We therefore introduce explicit agents and observed variables and analyse the resulting communication protocol within a causal-modelling framework.

\begin{figure*}[htp]
		\centering
		\includegraphics[width= 0.8 \textwidth]{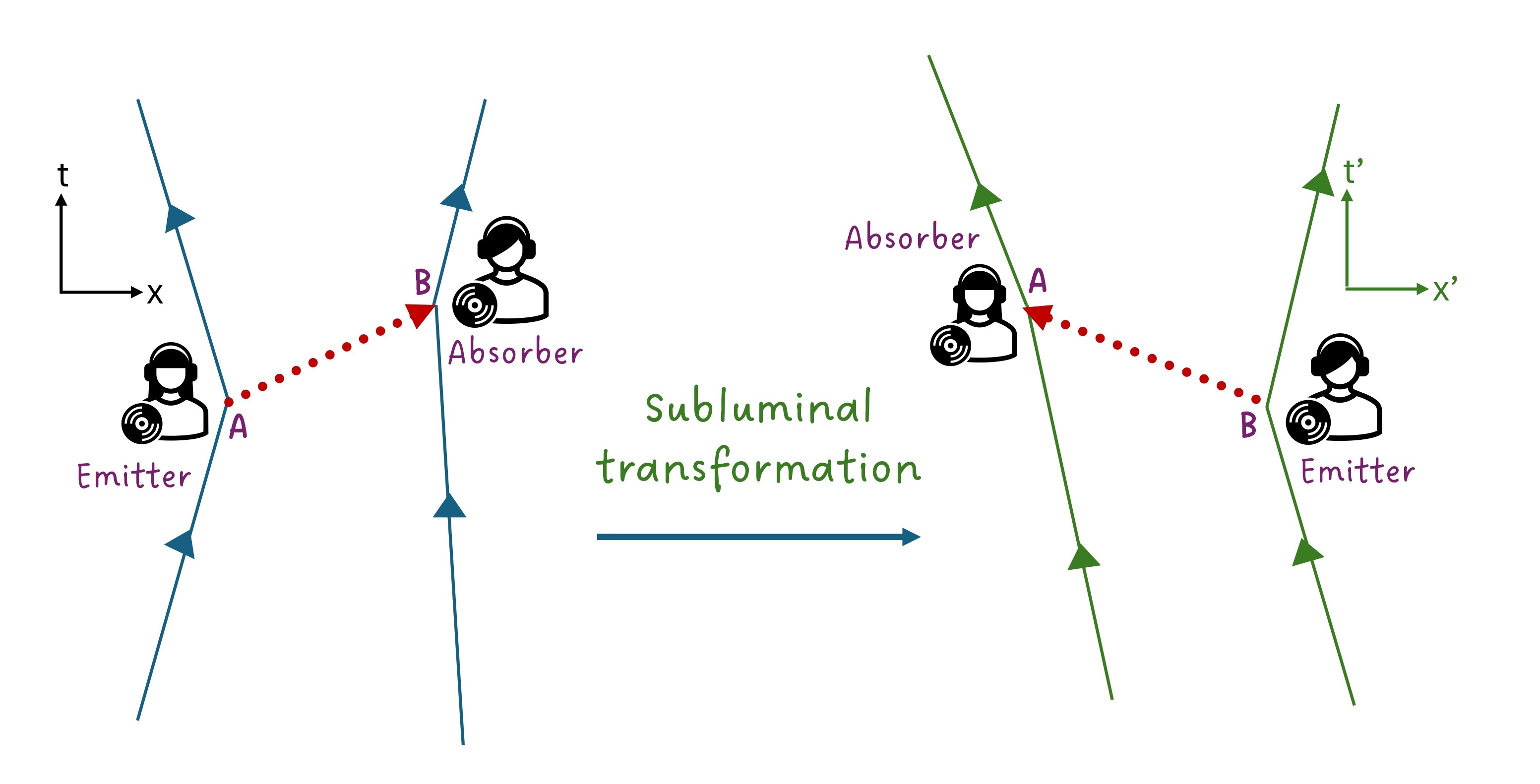}	\caption{A tachyon is sent from Alice to Bob. Dragan and Ekert \cite{dragan2020quantum} argue that for this to be consistent with relativity, no local-deterministic model of the tachyon's emission is available on the absorber's, i.e., Bob's side. Under a subluminal Lorentz boost, the roles of emitter and absorber change. Thus, by (relativistic) symmetry, they argue there can be no local-deterministic model on Alice's side either.}
		\label{ED}        
	\end{figure*}

We now apply the reasoning of Dragan and Ekert to the communication protocol in \cref{cp} (note that the fact that our protocol considers what happens under a superluminal transformation while the argument depicted in \cref{ED} considers a subluminal transformation does not matter here. Indeed, a subluminal particle under a superluminal transformation is analogous to a superluminal particle under a subluminal transformation.)\\

We now distinguish the two claims that have been made in their argument. First, claim \textbf{A}, superluminal reference-frame transformations can reverse the apparent roles of the sender and receiver. Second, claim \textbf{B}, is that the absence of operational control over the precise spacetime locations of the relevant events introduces D-E uncertainty (which they claim to be quantum) and thereby prevents superluminal communication. However, even after applying \textbf{A} and \textbf{B}, to our communication protocol in \cref{swap2}, we can see superluminal signalling (here signalling between spacelike regions is unavoidable).\\

However, Claims \textbf{A} and \textbf{B} prevent superluminal signalling only if these regions are sufficiently extended to preserve the required causal ordering after transformation. The existence of uncertainty alone is therefore insufficient; its required extent must also be specified.\\
\begin{figure*}[htp]
		\centering
		\includegraphics[width= 0.9 \textwidth]{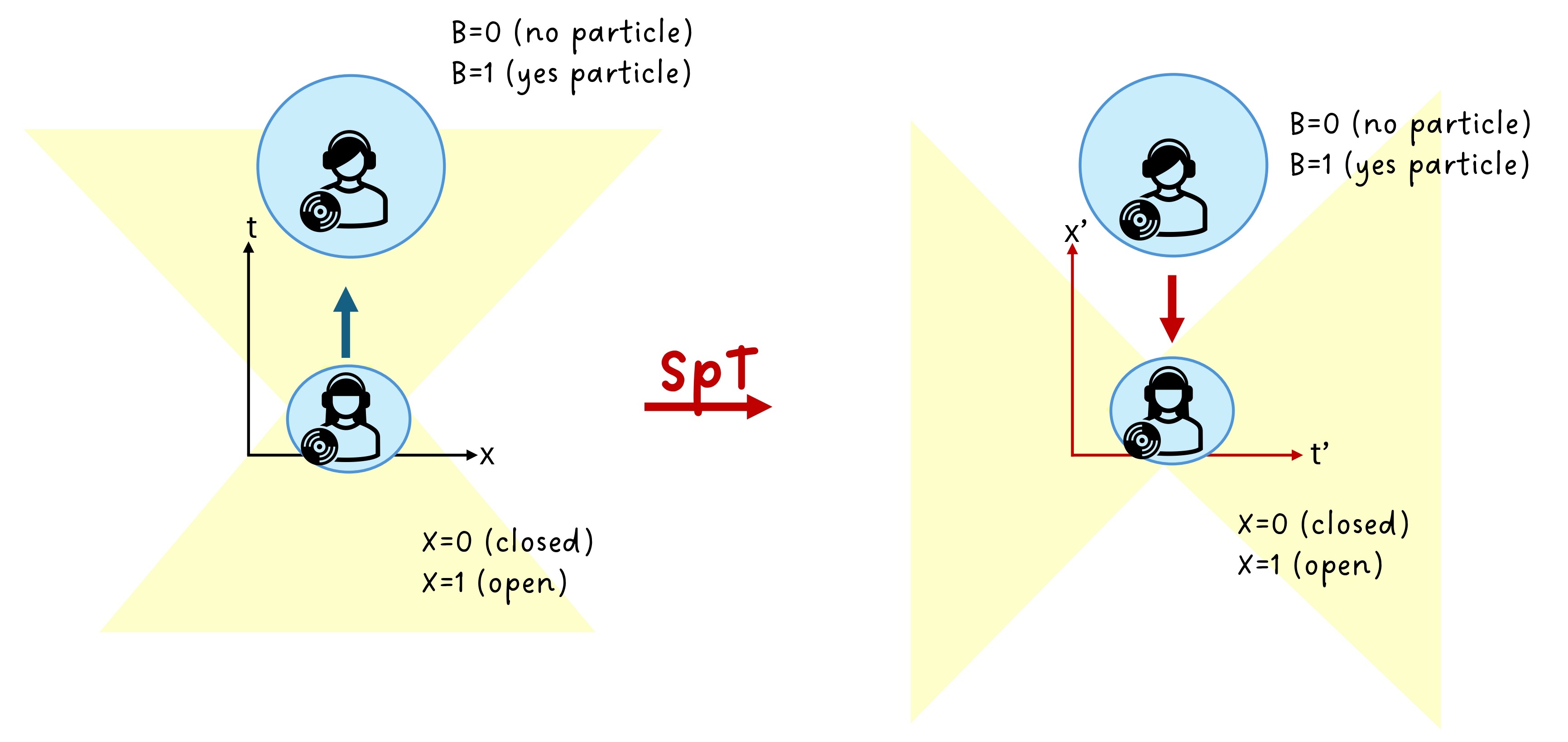}	\caption{Claim \textbf{A} and claim \textbf{B} applied to our communication protocol. The coordinates in black represent our frame, while the coordinates in red represent a superluminally boosted reference frame (of let's say Charlie) with respect to us.}
		\label{swap2}        
	\end{figure*}

The case of communication mediated by photons provides a useful contrast. Because lightlike trajectories remain lightlike under the transformations considered here, their ordering may reverse without the corresponding events becoming spacelike separated. In this case, claim \textbf{A} can exchange the roles of sender and receiver without requiring claim \textbf{B}: D-E uncertainty to prevent superluminal signalling. By contrast, for communication mediated by massive particles, the transformed events may become spacelike separated, and the proposed resolution must invoke claim \textbf{B}. Thus, the argument does not require D-E uncertainty uniformly across different types of information carriers. 
This raises several questions about the interpretation and scope of the proposed uncertainty. Since D-E uncertainty, as they propose and we discuss above, is about relaxing locality, how does then one conclude claims on determinism when any assumption about determinism was never made? How, then, can any analogy with quantum theory be justified? Even if one agrees that some D-E uncertainty (on locality assumption) is required, how to find a measure of such uncertainty and what does it really depend on? If some form of uncertainty is indeed unavoidable, how should we quantify them, given the size and mass of the information carrier? More fundamentally, is D-E uncertainty alone sufficient to ensure consistency, or must other operational principles also be abandoned? What consequences does their argument have for operational protocols where the emission of a particle is actively controlled by an agents' physical device? Such questions are left open to the readers of \cite{dragan2020quantum}. We explore these formally.

\section{Framework}\label{sec4}

The assumptions underlying Dragan-Ekert's proposal and the conditions required for its consistency have not yet been formulated within a sufficiently explicit framework to permit a systematic assessment of the challenges it may face. In this section we formalise and distinguish two independent formalisms used to study causal correlations. First, we review information-theoretic causal models, which are defined independently of spacetime. Second, we introduce the spacetime structure which is associated with and may depend on the reference frame. Third, we connect these structures by embedding operational variables into frame-dependent spacetime regions. Keeping these components distinct allows us to identify which assumptions concern causal relations, spacetime ordering, or the compatibility between them. 

\subsection{Information-theoretic causal models}

Taking an operational approach to physical theories, it is possible to define causal influence in information-theoretic terms, capturing a flow of information
between observed and unobserved systems. There are many different approaches for doing so \cite{vilasini2022general, Pearl2009,OreshkovCostaBrukner2012, barrett2020, henson2014theory}. Here we adopt and extend the approach of \cite{vilasini2022general}, where we formulate this information-theoretic notion in a minimal and theory-independent manner while disentangling it from spacetime and reference frames. This will allow us to clearly separate the information-theoretic assumptions
of our no-go theorem from the spatiotemporal and reference-frame based ones.

\begin{definition}[Causal graph]

A causal graph $\C{G}$ is a directed graph
\[
\C{G} = (V_{\C{X}}, E_{\C{X}}),
\]
whose vertex set $V_{\C{X}}$ is partitioned into observed and unobserved nodes. The set of observed nodes, $V_{\C{X}_{\mathrm{obs}}}$, corresponds to the observed variables $\C{X}_{\mathrm{obs}} \in \C{X}$, with each node representing a classical random variable of the same name. The remaining nodes, denoted by $\C{X}_{\mathrm{nobs}}$, represent unobserved systems, which may be classical, quantum, or post-quantum. The directed edges $E_{\C{X}}$ encode the causal relationships among the nodes.

\end{definition}

Given a causal graph $\C{G}$, the parents of a node correspond to the set of all nodes (observed or unobserved) that have a directed edge to that node. Moreover for any two nodes $X$ and $Y$, a directed path from $X$ to $Y$, denoted by $X\rightarrow...\rightarrow Y$ refers to a path from $X$ to $Y$ where all arrows in the path point in the same direction, away from $X$ and towards $Y$.

\begin{definition}[Free variable]
\label{def: freevar}
Given a causal causal graph $\C{G}$, an observed variable
$X\in \C{X}_{\rm obs}$
is called free if it has no parents in $\C{G}$.
\end{definition}

In other words, \textit{free} variables have no incoming arrows in a causal graph. The notion of free variable used here (similar to that of \cite{vilasini2022general}, by one of us (V.V)) is entirely graph-theoretic and with respect to the information-theoretic causal graph and therefore independent of any notion of spacetime. We also note that, in \cite{colbeck2013short}, the free variables are defined relative to the lightcone structure of a spacetime in which they are embedded, which is appropriate and meaningful when imposing constraints such as no-superluminal causation and frame-independence of the lightcone structure. While these constraints are well motivated in physical scenarios involving subluminal reference frames and observers, they are not essential for our purposes. In particular, our no-go theorem for SpTs requires the more general and weaker notion of free variables introduced above, which is formulated independently of relativistic causality principles. However, when such constraints are imposed, the causal graph's arrows align with the spacetime's lightcone structure in all frames and the above notion recovers the more well known notion of Colbeck-Renner \cite{colbeck2013short}.\\

\begin{definition}[Weak Causal model]
\label{def: causalmodel}
A \emph{weak causal model} (which we will simply call causal model in this paper) consists of a causal graph $\C{G}$ together with a joint probability distribution $P_{\C{G},\C{X}_{\mathrm{obs}}}$ over the set of observed random variables $\C{X}_{\mathrm{obs}}$ of $\C{G}$ that satisfy the following linking property: for any $X, Y\in \C{X}_{\mathrm{obs}}$ such that $X$ is a free variable,
$$\not\exists \text{ directed path } X\rightarrow...\rightarrow Y \text{ in } \C{G} \quad \Rightarrow \quad P_{\C{G},\C{X}_{\mathrm{obs}}}(X,Y)=P_{\C{G},\C{X}_{\mathrm{obs}}}(X)P_{\C{G},\C{X}_{\mathrm{obs}}}(Y).$$

\end{definition}

The above definition of causal model is inspired by \cite{vilasini2022general} and also permits cyclic causal graphs.
We have deliberately not assigned any meaning to the edges, to keep it minimal and theory independent. This means that the definition applies to all existing acyclic classical, quantum and post-quantum causal models and several cyclic ones across these theories too (see below), which give physical meaning to the edges in terms of causal mechanisms that correspond to information-theoretic channels of the theory. \\

We note that our definition above is strictly more general than the previous definition of \cite{vilasini2022general}, as the linking property we have used is weaker. The main idea of the linking property is to constrain the observed distribution according to the causal graph, particularly, separations between observed nodes in the graph imply (conditional) independences or absence of correlations in the corresponding observed distribution. In \cite{vilasini2022general}, the linking property is given by $d$-separation, which is satisfied in general acyclic classical, quantum and post-quantum causal models \cite{pearl2009causality, henson2014theory,barrett2020}, several cyclic models \cite{Bongers_2021,vilasini2022general}, but is known to fail in certain cyclic models \cite{Neal_2000,ferradini2025}. The linking property used here is implied by the $d$-separation linking property, but it is weaker as it is only applied when $X$ is parentless and considers individual variables $X$ and $Y$, as opposed to $d$-separation which considers three disjoint sets of observed nodes $X$, $Y$ and $Z$ and the conditional distribution $P_{\C{G},\C{X}_{\mathrm{obs}}}(X,Y\mid Z)$. In the rest of the paper, whenever we say causal model (without explicitly saying ``weak''), it should be understood in the sense of the general definition above, and our no-go theorem will apply to all such models.

We then define the notion of signalling used in this work formulated within this framework. This way of defining signalling is however quite standard from Bell type scenarios and other quantum information protocols and frameworks\cite{Brunner2014}.
\begin{definition}[Signalling using a free variable]
\label{def: signalling}
Consider a causal model on a causal graph $\C{G}$ associated with the observed distribution $P_{\C{G},\C{X}_{\mathrm{obs}}}$. Given a free variable $X\in \C{X}_{\mathrm{obs}}$ and any other observed variable $Y\in \C{X}_{\mathrm{obs}}$, we say that $X$ \emph{signals to} $Y$ whenever $X$ and $Y$ are correlated according to $P_{\C{G},\C{X}_{\mathrm{obs}}}$ i.e.,
$$P_{\C{G},\C{X}_{\mathrm{obs}}}(X,Y)\neq P_{\C{G},\C{X}_{\mathrm{obs}}}(X)P_{\C{G},\C{X}_{\mathrm{obs}}}(Y).$$
\end{definition}

Note that if $X$ is not free, correlation with $Y$ does not generally imply signalling or causal influence from $X$ to $Y$ (hence the famous slogan, correlation is not causation). However, when $X$ is free, the causal model guarantees that it can be correlated to $Y$ only if it has a directed path to $Y$, here the correlation captures that freely choosing and fixing $X$ will lead to a change in the distribution on $Y$ which is precisely signalling. This then allows us to operationally infer the presence of causal influence (through the directed path) of $X$ on $Y$. Note that at this point, we have not defined \emph{superluminal} signalling, as this is a concept that involves both the information-theoretic notion (i.e., signalling) and spatiotemporal ones (i.e., future lightcone), and will be defined later, after the latter is introduced.

\subsection{Reference frames and spacetime}\label{RF}

We now introduce reference frames and spacetime independently of causal models. We first specify how spacetime will be represented. The results of this work depend only on the light-cone ordering of spacetime events, rather than on metric distances or other geometric properties of spacetime. Hence, it is sufficient to model spacetime (events) in terms of partially ordered sets (posets).  
 In standard special relativity, where only subluminal transformations are considered, the light-cone structure is invariant under Lorentz transformations and hence frame-independent. As a consequence, all inertial observers agree on the ordering of events induced by the lightcones, and a single partially ordered set suffices to represent spacetime (e.g., causal boxes \cite{Portmann2017CausalBoxes}, process matrices \cite{OreshkovCostaBrukner2012}, or the affects framework \cite{vilasini2022general} by one of us (V.V)). In contrast, once superluminal reference-frame transformations are admitted, this invariance may no longer hold. As we will see later, such transformations may fail to preserve the light-cone ordering of spacetime events, in the sense that whether one event lies in the future light-cone of another can depend on the reference frame under consideration. To model this generally, we will define a separate poset $\C{T}_\C{F}$ to model the spacetime lightcone structure relative to each reference frame $\C{F}$.\\ 

Here, we are considering reference frames specifically as frames for spacetime, so that each frame comes with a causal partial order induced by its light-cone structure. This is the structure relevant for the present analysis and justifies representing each spacetime frame by a poset. More generally, in quantum reference-frame frameworks, one can also consider reference frames associated with other physical degrees of freedom, such as spin or direction, which need not carry any natural partial order. Such reference frames are outside the scope of the present framework, where we use “reference frame” specifically in the spacetime sense.

\begin{definition}[Reference frame and partial order]\label{def:reference-frame}
We model spacetime as a set of events $\C{T}$\footnote{This is essentially analogous to how one models spacetime as a manifold in general relativity. A manifold is a topological space that locally resembles Euclidean space. Ignoring this additional structure, which we are not interested in, spacetime simply becomes a set.}. A reference frame is defined by a label $\C{F}$ together with a corresponding partially ordered set
\[
 (\C{T}_\C{F}, \prec_\C{F}),
\]
where $\C{T}_\C{F}$ is isomorphic to $\C{T}$ and $\prec_\C{F}$ is the spacetime partial order induced by the light-cone structure in reference frame $\C{F}$.\end{definition}

We emphasise that $\prec_\C{F}$ denotes the order specified by the lightcone structure, which in our framework can depend on the chosen reference frame. 
Further, we note that our definition of reference frames is abstract and general, also, to be as general as possible for our no-go theorem. 
It is essentially just a label $\C{F}$ with an associated partial order with no further structure. Thus, our results apply to any framework of reference frames compatible with the idea of a background spacetime. For example, it is irrelevant whether 
the reference frames can be associated with physical systems (as is the case in the framework of quantum reference frames \cite{Ahmad2022QuantumRelativitySubsystems, Hoehn2023SubsystemRelativity, CastroRuiz2025RelativeSubsystems, Cepollaro2025SubsystemCoherence, sen2026superluminal}) 
or not. 

\begin{definition}[Reference-frame transformation]\label{RF}
A reference-frame transformation is a map
\[
\C{S}_{\C{F} \to \C{F'}} : \C{T}_\C{F} \rightarrow \C{T}_\C{F'}.
\]
\end{definition}
At this stage, we impose no further structural assumptions on the reference-frame transformations. In particular, it is not required to preserve the partial order (like in standard subluminal Lorentz transformations which forms a restricted subclass of our maps), nor is any regularity condition such as continuity or differentiability (this would require additional structure on the partial orders) assumed. This level of generality is deliberate, as it allows the framework to encompass both standard Lorentz transformations as well as more general transformations, including superluminal reference-frame transformations, within a unified abstract setting.\\

\begin{example}[Superluminal Lorentz transformations]
As an illustrative example, let us model the superluminal Lorentz transformations in 1+1 dimensions from \cref{LB2} in this framework. To each frame, we associate a partial order $(\C{T}_\C{F}, \prec_\C{F})$ capturing the spacetime and lightcone structure from the perspective of this frame $\C{F}$. One may take $\C{T}_\C{F}$ to be a coordinate representation of spacetime
in frame $\C{F}$, with $\prec_\C{F}$ given by the corresponding light-cone order. A subluminal Lorentz transformation from reference frame $\C{F}$ to reference frame $\C{F}'$ is an order-preserving map $\C{S}_{\C{F} \to \C{F'}} : \C{T}_\C{F} \rightarrow \C{T}_\C{F'}$ as it preserves the lightcone structure. On the other hand, a superluminal Lorentz transformation (SpLT) is a map that does not preserve the order. In particular, the superboost $\C{S}_{v\to\infty}$ which exchanges time and space maps and timelike separated events to spacelike separated events and vice versa, i.e., $t \prec t'$ or $t' \prec t \Leftrightarrow \C{S}_{v\to\infty}(t) \not \prec \C{S}_{v\to\infty}(t' )$ or $\C{S}_{v\to\infty}(t') \not \prec \C{S}_{v\to\infty}(t)$ (see \cref{ordernp}).
\end{example}

\begin{definition}[Regions and ordering on regions]
\label{def: regions}
Given a reference frame $\C{F}$ and associated poset $\C{T}_\C{F}$, a spacetime region $R$ relative to $\C{F}$ is a subset of $\C{T}_\C{F}$. The set of regions $\C{R}(\C{T}_\C{F})$ is given by the power set of $\C{T}_\C{F}$. For any two regions $R, R' \in \mathcal{R}(\C{T}_\C{F})$, we define the ordering relation
\begin{equation}
    R \xrightarrow{\C{F}} R'
    \Leftrightarrow
    \exists r \in R, r' \in R' :
    r \prec_{\C{F}} r'.
\end{equation}
\end{definition}

The above definition of spacetime regions is similar to that of \cite{Vilasini2024Embedding}, but made relative to the poset $\C{T}_F$ of a given frame $\C{F}$. We note that the order relation $\xrightarrow{\C{F}}$ is not a partial order as, for example, any two regions $R, R'$ with non-empty intersection satisfy $R \xrightarrow{\C{F}} R'$ and $R' \xrightarrow{\C{F}} R$. In fact, this order relation is not even a preorder in general as it does not have to satisfy transitivity.

\subsection{Causal models in reference frames and spacetime}
Until this stage, no compatibility constraints between the causal structure of a causal model and the structure of spacetime have been imposed. There are therefore two distinct notions of causal ordering in the framework. The first is the information-theoretic causal order, defined by directed paths in the causal model $\mathcal{G}$. The second is the spacetime causal order associated with each frame $\C{F}$, given by the light-cone structure of $(\C{T}_{\mathcal{F}},\prec_{\mathcal{F}})$ which induces the order $\xrightarrow{\mathcal{F}}$ on the (frame-dependent) spacetime regions $\C{R}(\C{T}_\C{F})$.\\

We now proceed to relate these notions of causal order for the case of observed variables (which are the operationally accessible variables) via the concept of embedding and subsequently define a weak compatibility condition between the orders which captures a relevant no superluminal signalling condition in the presence of a spacetime embedded protocol.

\begin{definition}[Embedding of observed variables]
\label{def: embedding}
Consider a set $\C{X}_{\mathrm{obs}}$ of observed random variables. For each reference frame $\C{F}$, an embedding of these variables in the frame poset $\C{T}_\C{F}$ is a map
\[
\C{E}_\C{F} : \C{X}_{\mathrm{obs}} \rightarrow \mathcal{R}(\C{T}_\C{F}),
\]
where $\mathcal{\C{R}}(\C{T}_\C{F})$ denotes the set of regions (equivalently, the power set) of $\C{T}_\C{F}$. In addition, these embedding maps have to be consistent in the sense that for all $\C{F}, \C{F}'$, $\C{E}_{\C{F}'} = \C{S}_{\C{F}\rightarrow \C{F}'} \circ \C{E}_{\C{F}}$, with the action of the reference frame transformation on a region $R \in \C{R}(\C{T}_{\C{F}})$ is $\C{S}_{\C{F}\rightarrow \C{F}'}(R) = \{\C{S}_{\C{F}\rightarrow \C{F}'}(x) | x \in R\}$. 
\end{definition}

 In other words, the region associated with a observed variable in one reference frame automatically determines its region in every other frame related to it by the chosen reference-frame transformation.

\begin{definition}[Spacetime embedded protocol]
\label{def: protocol_spacetime}
 
Given a spacetime $\C{T}$, a spacetime embedded protocol associated with a set $\C{X}_{\mathrm{obs}}$ of observed variables is given by 
\begin{enumerate}
    \item a family of causal models, one for each frame $\C{F}$, where each such causal model is associated with the graph $\C{G}_{\C{F}}$ and observed distribution denoted $P_{\C{F}}$ (for simplicity, we have dropped the graph and variables $\C{X}_{\mathrm{obs}}$ from the  subscript relative to \cref{def: causalmodel}) and,
    \item for each frame $\C{F}$, an embedding $\C{E}_{\C{F}}$ of $\C{X}_{\mathrm{obs}}$ in $\C{T}_{\C{F}}$ as defined in \cref{def: embedding}.
\end{enumerate}
\end{definition}

The above definitions ensure that the observed variables $\C{X}_{\mathrm{obs}}$ of any operational protocol are identified in a frame-independent manner, while allowing for the causal model (both the causal graph $\C{G}_{\C{F}}$ and observed distribution $P_{\C{F}}$) to be frame-dependent, as well as the spacetime causal order $\C{T}_{\C{F}}$ and embedding $\C{E}_{\C{F}}$ of the observed variables to be frame-dependent up to respecting the consistency condition of \cref{def: embedding}. 
This allows both the information-theoretic and spacetime causal orderings (given via $\C{G}_{\C{F}}$ and $\C{T}_{\C{F}}$ respectively) to a priori be frame-dependent, while only keeping the identity of the operationally relevant variables frame-independent.\\


In a spacetime embedded protocol, both notions of causal ordering (information theoretic and spacetime) apply to the same set of observed variables. At this stage, no constraints have been imposed on how these two orderings must interact.
Previous frameworks like \cite{vilasini2022general} 
formalise relativistic causal principles as  precise compatibility conditions between the information-theoretic and spatiotemporal notions of causal orders. For example, \cite{vilasini2022general} considered a variety of such compatibility conditions to formalise the distinct relativistic causality principles such as no-superluminal causation, no-superluminal signalling, and the absence of backward-in-time influence. In the present work, however, we will only require a weaker no-superluminal-signalling condition tailored to protocols involving two observed random variables (the specific communication protocol we consider will be introduced later). We will not review the full compatibility conditions of the prior frameworks, but only present below the relevant weaker definition needed in our work.

 \begin{definition}[No superluminal signalling between two observed variables]
 \label{def: NSS_2var}
Consider a spacetime embedded protocol according to \cref{def: protocol_spacetime} and two observed variables therein $X,Y\in \C{X}_\mathrm{obs}$. 
We say that the protocol satisfies no superluminal signalling from $X$ to $Y$ if the following holds for each frame $\C{F}$ where $X$ is free in the causal graph $\C{G}_{\C{F}}$ of the causal model associated with $\C{F}$,

 $$X \text{ signals to } Y  \text{ (cf. \cref{def: signalling})} \quad\Rightarrow\quad \C{E}_{\C{F}}(X)\xrightarrow{\C{F}} \C{E}_{\C{F}}(Y) \text{ (cf. \cref{def: regions})},$$
 In other words, in each frame $\C{F}$, whenever  $X$ is free and is correlated to $Y$ via the observed distribution $P_{\C{F}}$, the spacetime region (in the spacetime poset $\C{T}_{\C{F}}$) assigned to $X$ must be ordered prior to the region assigned to $Y$. 
 \end{definition}

 The framework therefore separates three questions: whether \(X\) and \(B\) are correlated, whether their causal relation is preserved across frames, and whether their embedded regions remain compatibly ordered in each frame. The no-go theorem shows that these requirements cannot all be maintained under the assumptions introduced next.

\section{No-go theorem}\label{sec6}
The discussion in \cref{sec5} highlights a tension between the communication protocol and the interpretation proposed by Dragan and Ekert. Dragan and Ekert motivate spacetime uncertainty as a possible means of reconciling superluminal reference frames with the absence of superluminal signalling. However, neither the required form of this uncertainty nor the assumptions under which it would suffice have been formulated operationally. We now apply the framework of \cref{sec4} to the communication protocol shown in \cref{cp} and introduced in \cref{sec5} and identify a set of assumptions that cannot be satisfied simultaneously.\\

 We will now state the assumptions that were implicitly or explicitly considered while discussing the communication protocol. Recall the spacetime-embedded protocol of \cref{def: protocol_spacetime} and the communication protocol depicted in \cref{cp}. In the present setting, the protocol contains two observed variables, Alice’s input $X$ and Bob’s output $B$. In each reference frame $\C{F}$, these variables are embedded into the spacetime regions $\C{E}_{\C{F}}(X)$ and $\C{E}_{\C{F}}(B)$, respectively. The assumptions below concern this fixed pair of variables and their associated regions across reference frames.\\

A0. \textbf{The observed variables $X$ embedded in $R_X$ and $B$ embedded in $R_B$ are correlated, $P_{\C{F}}(X, B) \neq P_\C{F}(X) P_{\C{F}}(B)$ in at least one reference frame $\C{F}$}

Correlation between $X$ and $B$ in frame $\C{F}$ does not, by itself, imply signalling in either direction, since the correlation could instead arise from a common cause. However, if $X$ is also free in $\C{F}$, then the correlation implies that $X$ signals to $B$ in that frame, yielding an operational communication protocol \cite{vilasini2022general}.\\

A1. \textbf{Existence of non-order-preserving transformations on operational regions}\\

There exists a non-order-preserving reference-frame transformation

\[
\C{S}_{\C{F} \rightarrow \C{F}'}: \C{T}_{\C{F}} \rightarrow \C{T}_{\C{F}'}
\]
such that\[
R'_X=\C{S}_{\C{F} \rightarrow \C{F}'}\C{E}_{\C{F}}(X),\qquad R'_B=\C{S}_{\C{F} \rightarrow \C{F}'}\C{E}_{\C{F}}(B),
\]

where $R'_X, R'_B \in \C{R}(\C{T}_\C{F'})$ and where the transformed regions do not preserve the original ordering. That is, if the causal ordering in the reference frame $\C{F}$ is
\[R_X \xrightarrow{\C{F}} R_B\]
then, in the $\C{F'}$ 
\[
R'_X \cancel{\xrightarrow{\C{F}'}} R'_B.
\]where $R'_X, R'_B \in \C{R}(\C{T}_\C{F})$.\\

An example is provided by the superluminal Lorentz extension in \cref{LB2}. Note, for our assumption, we only introduce a weaker non-order-preserving transformation at the level of regions, unlike the non-order-preserving transformations in terms of spacetime points introduced in the framework (\cref{RF}). \Cref{ordernp} illustrates how a non-order-preserving transformation acts at the level of regions. It is important to note that non-order preservation for a pair of space-time points does not automatically imply non-order preservation for the regions containing them, since the region may extend sufficiently large to retain their ordering after the transformation (we will discuss such situations in \cref{notA1}). 
 \footnote{For simplicity, our diagrams are constructed with superboosts in SpLTs in mind. Such transformations exchange the time and space coordinate. Hence, the frame-dependent regions coincide diagrammatically with those in the original frame and therefore appear unchanged. However, formally, the regions are still frame-dependent and under non-superboost or nonlinear SpTs, their diagrammatic representation will generally differ in more obvious ways between frames. Nevertheless, the argument extends straightforwardly to general SpTs.} Another example is our illustration of the communication protocol \cref{swap2}.

\begin{equation}\label{ordernp}
    \includegraphics[width=1\linewidth]{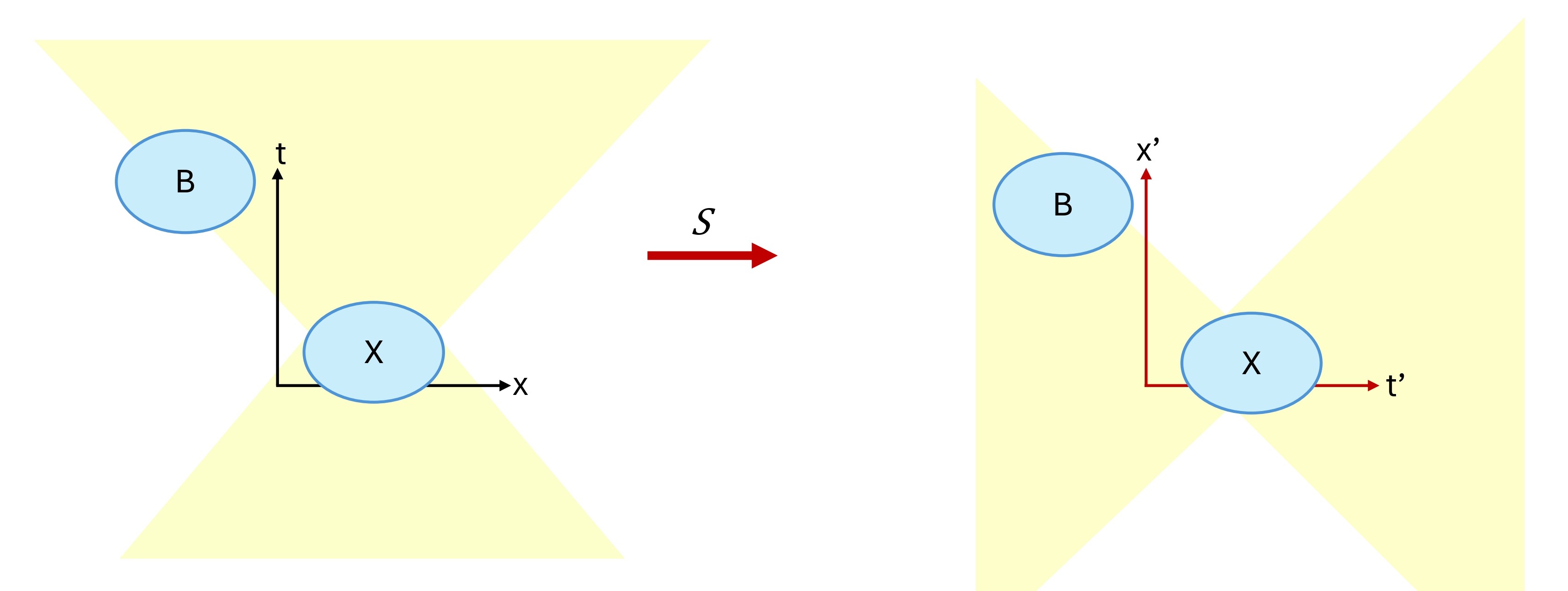}
\end{equation}

A2. \textbf{No superluminal signalling}

 The communication protocol satisfies no superluminal signalling from $X$ to $B$ according to \cref{def: NSS_2var}.\\ 

A3. \textbf{Preserve probabilities under frame transformation}

Observed correlations are independent of the reference frame, i.e., for all reference frames $\C{F}, \C{F}'$\\

$P_\C{F}(B)=P_\C{F'}(B)$; $P_\C{F}(X)=P_\C{F'}(X)$ and $P_\C{F}(B\mid X)=P_\C{F'}(B\mid X)$.\\

In the proposal of \cite{dragan2020quantum}, the requirement that operational statistics agree across inertial frames is a direct expression of the principle of relativity. This requirement plays an essential role in their reasoning; a frame-dependent distinction between predictable and unpredictable emissions would make the frames operationally distinguishable, contrary to the relativity principle.\\

 A4. \textbf{$X$ is free in all reference frames} 

We understand the variable $X$ as an input setting that Alice can freely choose. Hence, it should be \emph{free} in $\C{F}$ and with respect to the causal graph $\C{G}_{\C{F}}$ according to \cref{def: freevar} and it should remain so in any other reference fame $\C{F'}$ with respect to the causal graph $\C{G}_{\C{F'}}$.

\newtheorem*{theorem*}{Theorem}
\begin{theorem*}
 There exists no physical theory that simultaneously satisfies assumptions A0-A4.

\end{theorem*}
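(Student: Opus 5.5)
The plan is to argue by contradiction: assume a spacetime embedded protocol (\cref{def: protocol_spacetime}) with observed variables $X$ and $B$ that satisfies A0--A4 at once, and then exhibit a frame in which $X$ is free, $X$ signals to $B$, and $\C{E}(X)$ is not ordered before $\C{E}(B)$. This contradicts A2 via \cref{def: NSS_2var}.

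First I will propagate the correlation. By A0 there is a frame $\C{F}_0$ with $P_{\C{F}_0}(X,B)\neq P_{\C{F}_0}(X)P_{\C{F}_0}(B)$. Because the joint distribution is $P(X)P(B\mid X)$, A3 transfers this statement to every frame. Writing $P_{\C{F}'}(X,B)=P_{\C{F}'}(X)P_{\C{F}'}(B\mid X)=P_{\C{F}_0}(X)P_{\C{F}_0}(B\mid X)=P_{\C{F}_0}(X,B)$ on the support of $X$, and using $P_{\C{F}'}(B)=P_{\C{F}_0}(B)$, I get $P_{\C{F}'}(X,B)\neq P_{\C{F}'}(X)P_{\C{F}'}(B)$ for every frame $\C{F}'$. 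By A4, $X$ is free in $\C{G}_{\C{F}'}$ for every $\C{F}'$, so \cref{def: signalling} gives that $X$ signals to $B$ in every frame.

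Next I will use A2 in the frame $\C{F}$ named in A1. Since $X$ is free there and signals to $B$, A2 gives $\C{E}_{\C{F}}(X)\xrightarrow{\C{F}}\C{E}_{\C{F}}(B)$, that is, $R_X\xrightarrow{\C{F}}R_B$. This is exactly the hypothesis of A1. A1 then supplies a transformation $\C{S}_{\C{F}\to\C{F}'}$ with $R'_X\not\xrightarrow{\C{F}'}R'_B$. By the consistency condition of \cref{def: embedding}, $\C{E}_{\C{F}'}(X)=\C{S}_{\C{F}\to\C{F}'}(\C{E}_{\C{F}}(X))=R'_X$, and likewise $\C{E}_{\C{F}'}(B)=R'_B$. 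Applying A2 in $\C{F}'$, where $X$ is free and signals to $B$ by the previous step, forces $R'_X\xrightarrow{\C{F}'}R'_B$. This is a contradiction.

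The main obstacle is interpretive rather than technical: A1 has to be read as a conditional existence claim. I will read it as saying that for the operational regions of this protocol, whenever $R_X\xrightarrow{\C{F}}R_B$ holds in some frame $\C{F}$, there exists $\C{F}'$ reversing it. Two further points need care. First, A1 should be applied at the frame produced by A2 rather than at an arbitrary frame. Second, the A3 step should be stated carefully for values of $X$ with zero probability. Since $X$ is a free input, I can assume it has full support (as with the uniform bit), which makes the conditional distributions well defined.
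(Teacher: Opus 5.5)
Your proof is correct and follows essentially the paper's chain. A0 and A4 give signalling, A2 forces $R_X\xrightarrow{\C{F}}R_B$, A1 together with embedding consistency gives $R'_X\cancel{\xrightarrow{\C{F}'}}R'_B$, and A3 plus A4 give signalling in $\C{F}'$, which contradicts A2. The one difference is that you carry the correlation to every frame via A3 before invoking A1, which correctly handles the fact that the frame supplied by A0 need not be the frame named in A1, something the paper assumes without comment; your full-support assumption is also unnecessary, since equal $X$-marginals already make the joint distributions agree on zero-probability values.
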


\begin{proof}

By assumption A0, there exists a reference frame $\C{F}$ and observed variables $X, B$ such that 
\[
P_\C{F}(X,B)\neq P_\C{F}(X)P_\C{F}(B).
\]
By A4, $X$ is free in $\C{F}$, hence, by \cref{def: signalling}, $X$ signals to $B$ in $\C{F}$. 

Assumption A2 (equivalently, \cref{def: NSS_2var}), then requires
\[
\C{E}_\C{F}(X) {\xrightarrow{\C{F}}} \C{E}_\C{F}(B).
\]
By assumption A1, there exists a non-order-preserving reference frame transformation $\C{S}_{\C{F} \rightarrow \C{F}'}: \C{T}_{\C{F}} \rightarrow \C{T}_{\C{F}'}$ such that the transformed operational
regions
\[
R_X' = \C{E}_\C{F'}(X)= \C{S}_{\C{F} \rightarrow \C{F}'}(R_X), \qquad R_B'=\C{E}_\C{F'}(B) = \C{S}_{\C{F} \rightarrow \C{F}'}(R_B)
\]
where $R'_X, R'_B \in \C{R}(\C{T}_{\C{F}'})$ do not preserve their original causal ordering,
\[
\C{E}_\C{F'}(X)  \cancel{\xrightarrow{\C{F'}}} \C{E}_\C{F'}(B).
\]

By assumption A3, $X$ and $B$ remain correlated in $\C{F}'$, and by assumption A4, $X$ remains free. Thus \cref{def: signalling} implies $X$ signals to $B$ in $\C{F}'$. Assumption A2 therefore requires in frame $\C{F}$,
\[
\C{E}_\C{F'}(X) {\xrightarrow{\C{F'}}} \C{E}_\C{F'}(B).
\]
This contradicts the non-order preservation required by A1,
\[
\C{E}_\C{F'}(X) \cancel{\xrightarrow{\C{F'}}} \C{E}_\C{F'}(B).
\]

\begin{equation}
    \includegraphics[width=0.7\linewidth]{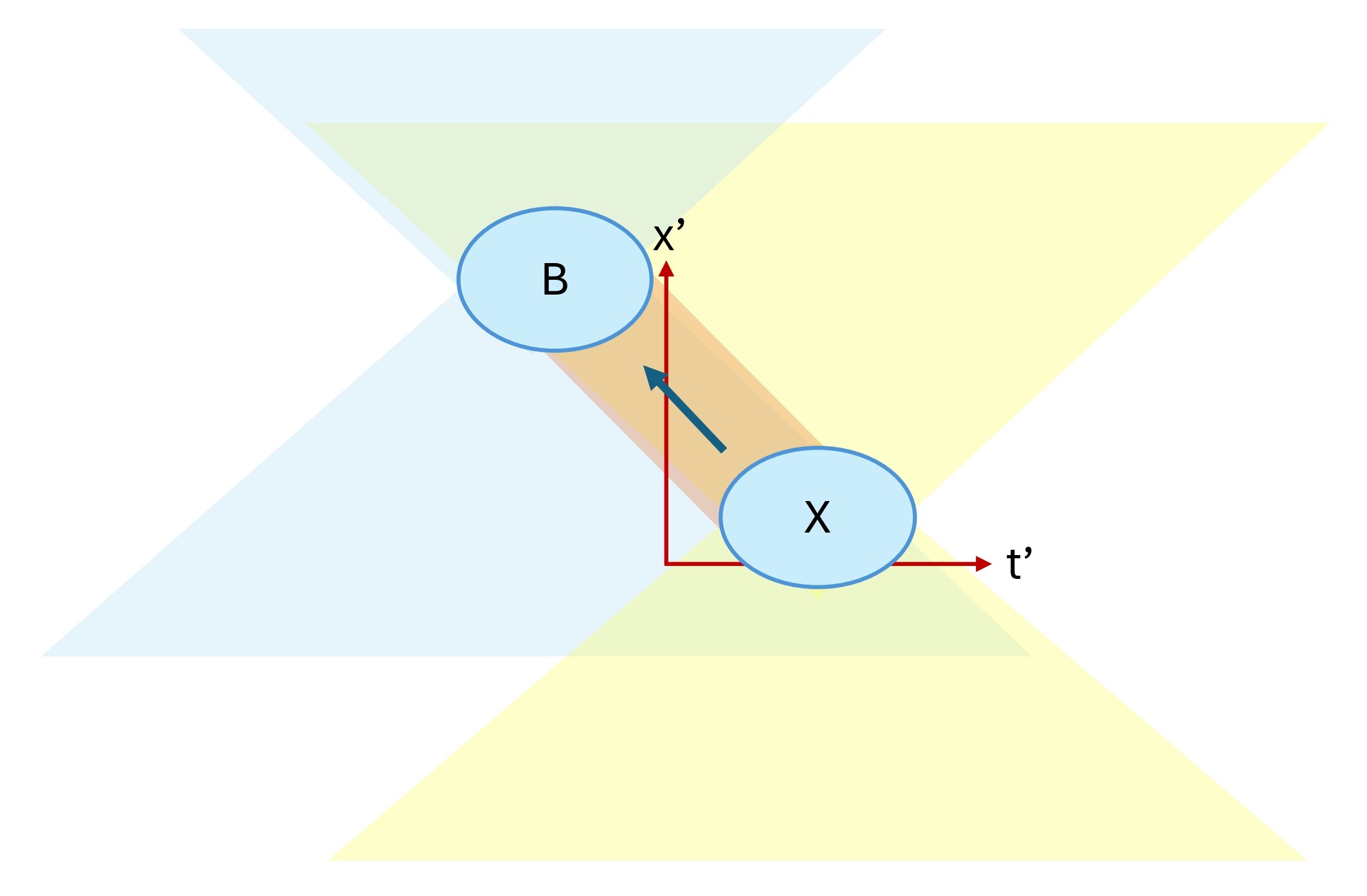}
    \label{signallingback}
\end{equation}
Therefore assumptions A0--A4 cannot all hold simultaneously. \end{proof} 

The theorem does not identify a unique resolution. It instead shows that any model admitting the transformation in A1 must relinquish at least one of the remaining assumptions or restrict its domain so that A1 does not apply. In \cref{sec7} we examine the alternatives and their implications for the uncertainty proposed in \cite{dragan2020quantum}.

 \section{Relaxing the assumptions and implications on D-E uncertainty}\label{sec7}

The no-go theorem established in the previous section demonstrates that assumptions A0–A4 cannot all be satisfied simultaneously. Consequently, any framework admitting superluminal reference-frame transformations must abandon at least one of these assumptions. The theorem therefore does not single out a unique resolution; rather, it delineates the range of logically consistent alternatives. In this section, we will discuss the consequences of violating each assumption. In particular, we discuss why giving up these assumptions is undesirable from a general philosophical point of view of accepting a theory with SpTs, and also why it is undesirable from the perspective of \cite{dragan2020quantum}.\\ 

\subsection{$\lnot$ A0: Correlation (and hence protocol) does not exist}
The first possibility is to reject assumption A0 itself. This amounts to rejecting, in every reference frame, correlations between observed variables of the kind routinely realised experimentally. This would, however, be rather absurd as the existence of correlations is necessary for reasonable physical experiments. Rejecting A0 therefore avoids the no-go result only by excluding the operational phenomenon that \cite{dragan2020quantum} introduced to describe.
Giving this up would mean such correlations do not exist in a world that allows non-order-preserving transformations, for example a world accommodating superluminal reference frames \cite{parker1969, dragan2020quantum}. In this case, D-E uncertainty is unnecessary because the correlation that gives rise to the superluminal-signalling problem is absent.

\subsection{$\lnot$ A1: Frame transformations are always order preserving on operational regions}\label{notA1}

 Assumption A1 requires the existence of at least one reference frame transformation $\C{S}_{\C{F}\rightarrow\C{F'}}$ that fails to preserve the causal ordering of the operational regions associated with the protocol. There are two ways to contradict this.\\

 The first option excludes SpTs and recovers the conventional special-relativistic setting, in which Lorentz transformations preserve the light-cone order. No D-E uncertainty is then required.
 While this is a completely viable option, it goes against the premise we set out to study, namely, which constraints SpTs impose on physical theories. We therefore do not consider this option further. 

\begin{equation}
    \includegraphics[width=1\linewidth]{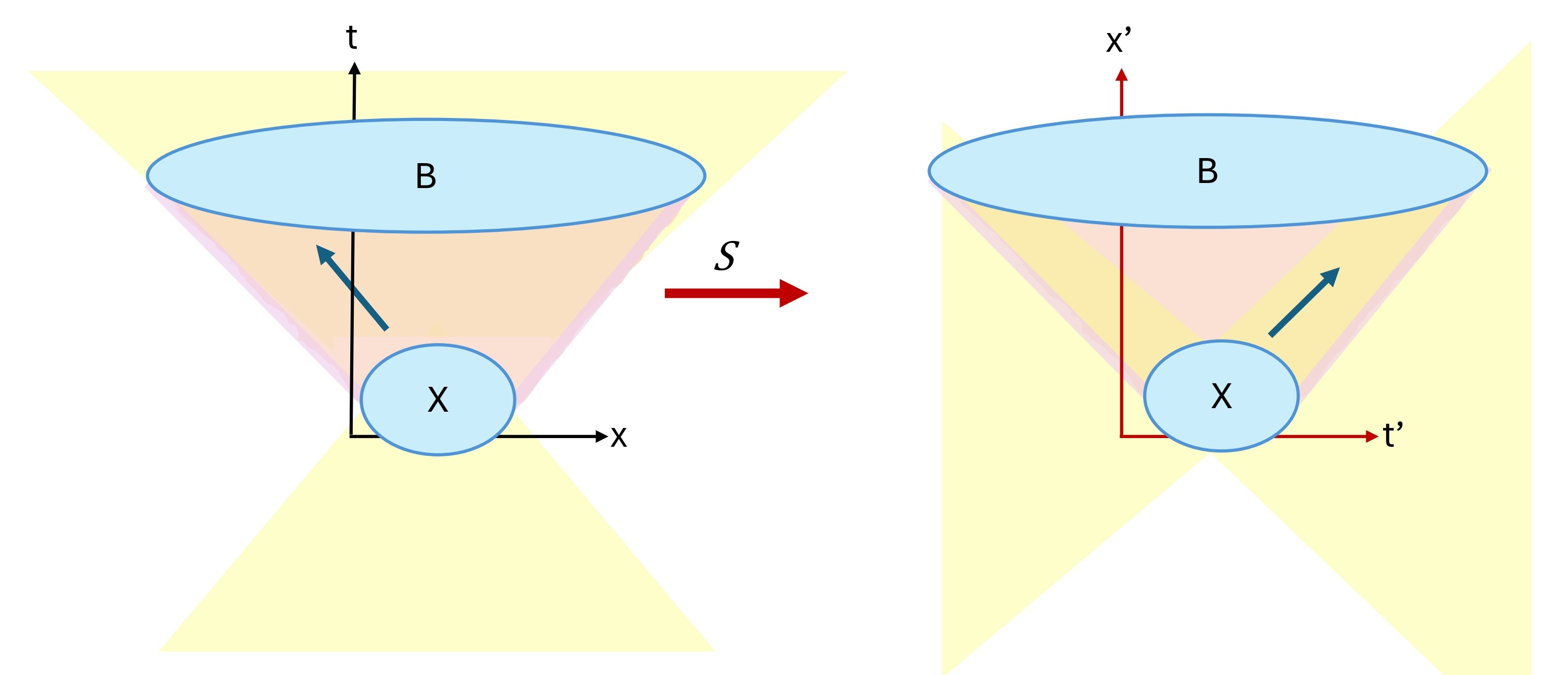}\label{del}
\end{equation}

Second, one may allow SpTs to change the ordering of spacetime points while requiring the operational regions $R_X$ and $R_B$ to remain order preserving. This immediately implies that these operational regions cannot be points. These regions must then have sufficient spatiotemporal extent so that a variable embedded in such a region has sufficient uncertainty in its location. This would mean, for every admissible transformation, in every frame $\C{F}'$, $\C{E}_\C{F}'(X)\Rprec{\C{F}'} \C{E}_\C{F}'(B)$. For example, the support of Bob's variable $B$ may extend over a region large enough that at least one spacetime point in $R_B$ is always in the causal future of $X$, see \cref{del}. The communication protocol could then be embedded consistently in every frame without superluminal signalling.\\
 
Conceptually, this resolution is closely related to the proposal of \cite{dragan2020quantum}, according to which agents cannot control the exact time at which a superluminal particle is emitted and therefore cannot use the process to transmit information superluminally. In our framework, this lack of control is represented by sufficiently extended spacetime uncertainty regions that prevent the protocol from defining operational superluminal signalling.\\  

The issue, however, is that the uncertainty becomes too large. Consider a setup as in \cref{swap2} under a superluminal boost. For the operational regions associated with $X$ and $B$ to retain their ordering in the superboosted reference frame, $R_B$ must have significant spatial uncertainty, scaling with the temporal distance between $R_X$ and $R_B$. This is illustrated in \cref{del}. The ``operational'' regions then lose essentially any operational meaning. For a sufficiently macroscopic, i.e., classical Bob, we would expect the spatial uncertainty we have about the location of his guess $B$ 
to be at most approximately equal to the spatial extension of his lab, so on the order of 1-10 metres. However, already for a temporal distance of 1 second between Alice and Bob, the spatial uncertainty to violate A1 is on the order of 1 light-second, or approximately 300,000 kilometres. Thus, even for an ordinary subluminal communication protocol, avoiding A1 through D-E uncertainty requires an operational region many orders of magnitude larger than what is experimentally feasible.

\subsection{$\lnot$ A2: Superluminal signalling}
Another route to evading the no-go theorem is to reject A2 and permit operational superluminal signalling.  The apparent influence from Alice's input $X$ to Bob's output $B$ may then propagate outside the future lightcone, or even appear to propagate backwards in time for some observers, as illustrated in \cref{signallingback}. 
 The cost of this resolution is not merely the abandonment of a single foundational principle, but rather of two. In addition to violating assumption A2, it undermines the central motivation of relativity by introducing preferred reference frames in which superluminal signalling is permitted.\\  
 
We further note that the entire premise of introducing D-E uncertainty was to prevent superluminal transformations and tachyons from enabling superluminal communication. If superluminal signalling is instead accepted, we no longer need D-E uncertainty nor must we abandon local determinism for the reason proposed in \cite{dragan2020quantum}. Rejecting A2 therefore avoids the no-go result by relinquishing the very constraint that D-E uncertainty was introduced to preserve.

\subsubsection{Retrocausal explanation}

So far, we have focused on the observed variables $X$ and $B$. The same signalling correlation may, however, admit a causal explanation involving an unobserved variable $\lambda \in \mathcal{X}_{\mathrm{nobs}}$ that mediates their connection. In the classical setting considered here, $\lambda$ may be interpreted as a hidden variable.\\

The nonlocal hidden variable may then have \(X\) as a parent and may, in turn, be connected to \(B\) by directed edges. Whether these directed edges are retrocausal depends on how the nonlocal hidden variables are embedded relative to the chosen reference frame. In both \cref{linelambda}, \cref{linelambdafuture}, \(X\) remains free and correlated with \(B\), and therefore signals to \(B\); however, on the right-hand of the diagrams \(B\) lies in the causal past of \(X\), so we violate A2.
\begin{equation}
		\centering
		\includegraphics[width= 1 \textwidth]{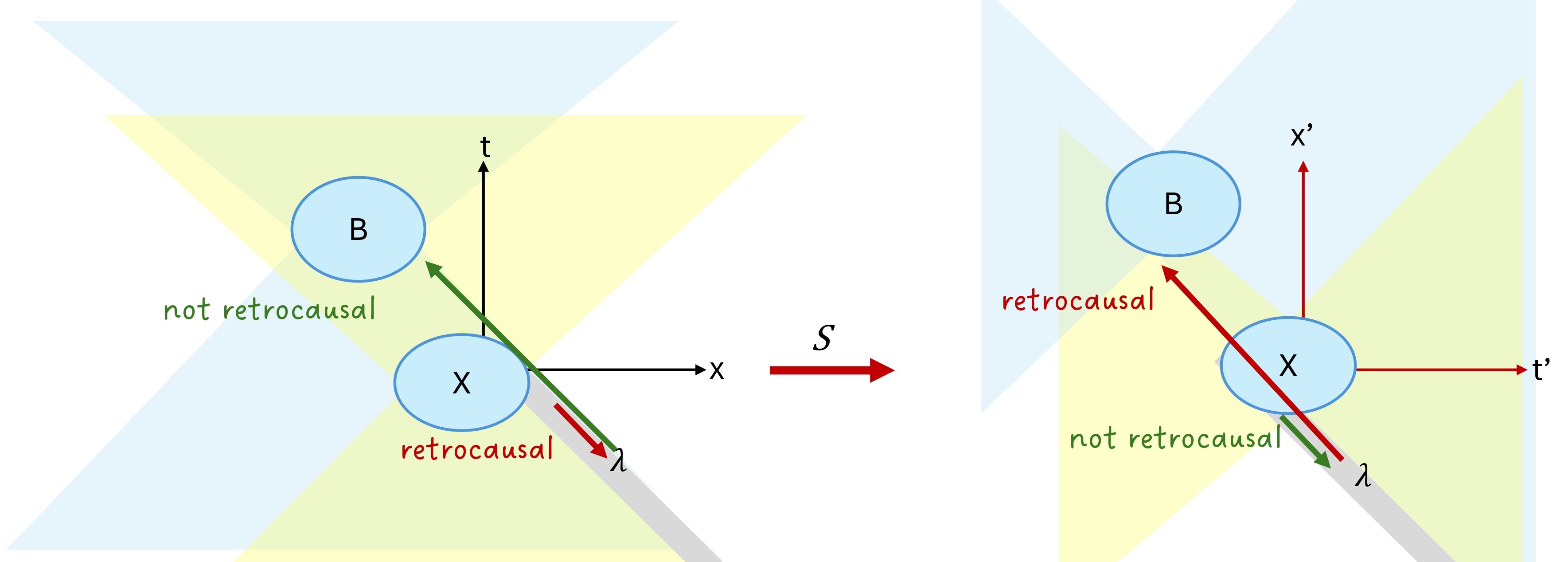}	
		\label{linelambda}        
	\end{equation}

\begin{equation}
		\centering
		\includegraphics[width= 1 \textwidth]{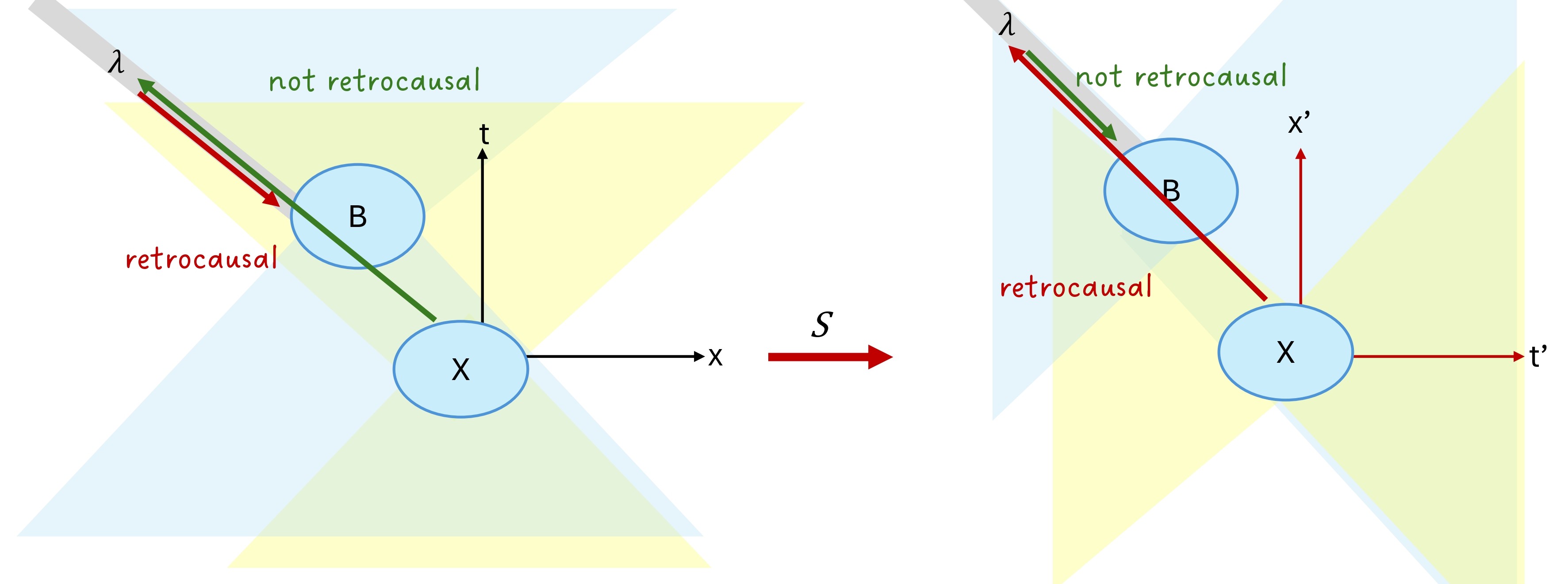}	
		\label{linelambdafuture}        
	\end{equation}

\subsection{$\lnot$ A3: Frame-Dependent Operational Statistics}

The no-go theorem can also be avoided by abandoning A3 at the expense of frame-dependent and observer dependent operational statistics, and violating the principle of relativity. Thus, the cost is relinquishing a strong form of observer independence: distinct reference frames would no longer agree on the operational probabilities associated with the same communication protocol. This suggests that the choice of the reference frame has observable consequences. In particular, if a non-order-preserving superluminal transformation maps the protocol to a frame $\C{F}'$ in which $X$ is not embedded in the causal past of $B$, the transformed statistics may satisfy $P_\C{F}(B|X) \neq P_\C{F'}(B|X)$. The correlation between $X$ and $B$ could 
then disappear in $\C{F}'$, allowing A2 and A4 to remain valid despite the change in operational causal order.\\ 

We note that existing frameworks like quantum reference frames \cite{Ahmad2022QuantumRelativitySubsystems, Hoehn2023SubsystemRelativity, CastroRuiz2025RelativeSubsystems, Cepollaro2025SubsystemCoherence, sen2026superluminal} which are motivated by the conceptually radical idea that  we can ``jump" into the perspective of superposed reference frames, nevertheless satisfy assumption A3. In such settings, even though certain quantum properties like entanglement and superposition are frame-dependent, the observable statistics or the probabilities that one would get after suitable measurements are still invariant across all reference frames.\\  

Finally, this option weakens the motivation for the specific resolution advocated in \cite{dragan2020quantum}, thereby violating principle of relativity. If observed correlations are reference frame-dependent and satisfy no-superluminal signalling, then no additional D-E uncertainty is needed. The variables may remain embedded at sharply localised spacetime points, while their correlation changes whenever a transformation alters the ordering of those points.

\subsection{$\lnot$ A4: Alice's variable $X$ is not free in at least in some reference frames}
The final option to evade the no-go theorem is to abandon A4, i.e., the existence of free variable $X$ across all possible references frames. This can be done in two ways: by switching which of $X$ and $B$ is free (\cref{swap}) or by giving up the notion of free variables altogether (\cref{supd}). We examine these possibilities in turn.

\subsubsection{The sender-receiver swap}\label{swap}
A4 is violated if free-ness is a reference-frame-dependent property; in some reference frames $X$ is a free variable signalling to $B$ and in others $B$ is a free variable signalling to $X$, with neither of $X$ or $B$ being fundamentally free in a frame-independent way (see \cref{sd}). This blocks the proof by preventing the correlation in the transformed frame from being interpreted as signalling from $X$ to $B$.
\begin{equation}
    \includegraphics[width=1\linewidth]{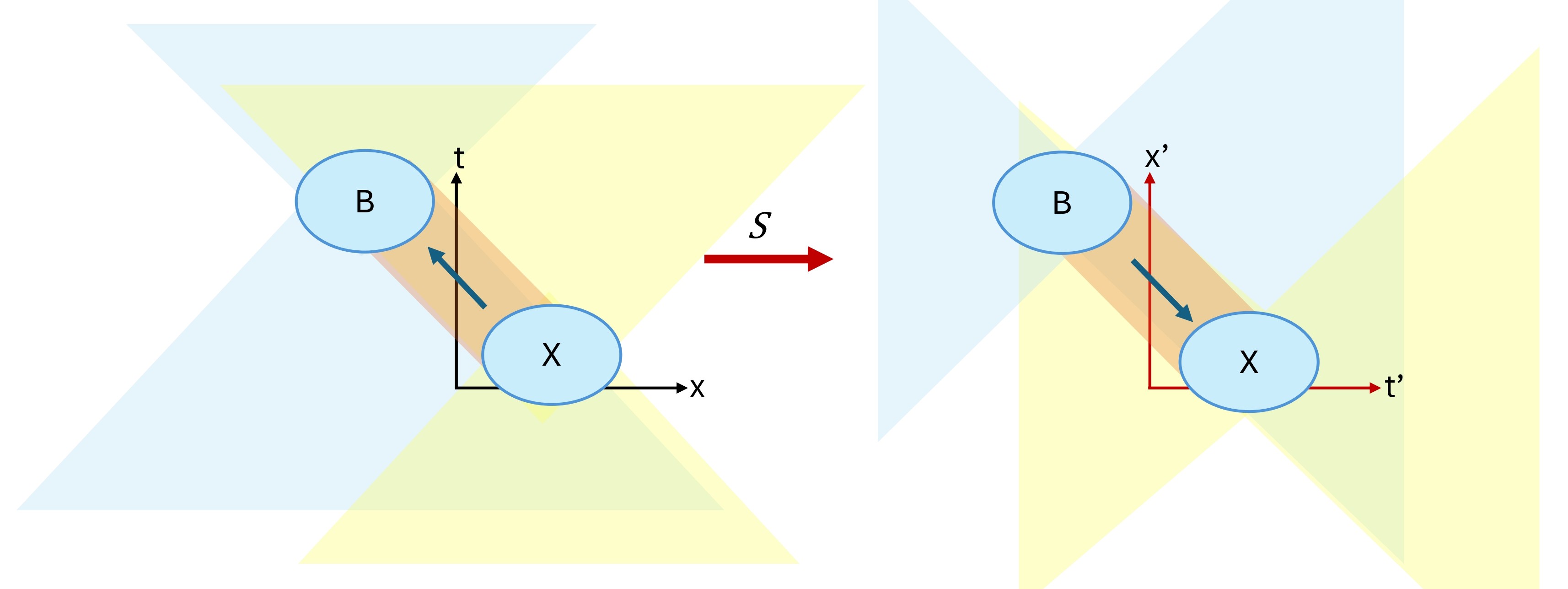}\label{sd}
\end{equation}

A similar approach appears in treatments of the tachyon energy problem, where the sender and receiver are re-identified in different reference frames to avoid apparent energy paradoxes \cite{schwartz2018tachyon, sen2026superluminal}.\\

The sender--receiver swap comes at the cost of making the operational distinction between inputs and outputs frame-dependent. This raises a more fundamental question about the meaning of a free variable. If a variable is free in one reference frame but not in another, what does it mean to say that $X$ represents Alice’s free choice from her perspective, if, from Charlie’s perspective, the same variable $X$ is instead a consequence of Bob’s actions? In an ordinary communication protocol, the distinction between an input and an output is operationally clear. Alice chooses a bit, implements that choice using her device, and Bob subsequently records the corresponding output. If Charlie describes the same experiment in a reference frame in which the causal roles are reversed, how should he interpret Alice’s choice and Bob’s measurement?\\

Swapping the sender and receiver may seem more satisfactory than any of the previous options from the perspective of \cite{dragan2020quantum}, however is not sufficient in every configuration. For example, the one depicted in \cref{swap2} where reversing the direction of signalling by itself do not restore the ordering required by A2. We note that the D-E uncertainty required is hence proportional to the difference in communication speed and speed of light (since, timelike communication becomes spacelike under a SpT). Hence, one of A1, A2 or A3 would still need to be dropped.\\

While the previous example shows that the sender-receiver swap is insufficient in some scenarios, it is at the same time too efficient in others, removing any need to have uncertainty. Consider a situation where the communication protocol is between observers localised in lightlike separated spacetime points $r_X, r_B$. This could be understood as communication carried out using photons. In our abstract framework, these lightlike separated points correspond more generally to points which can have their ordering flipped, but never become unordered under any reference frame transformation. Superluminal signalling is thus avoided by simply swapping which party is the sender and which the receiver depending on the ordering.  
Hence, the sender–receiver swap does not require any uncertainty in the spacetime localisation of photons, and the argument of \cite{dragan2020quantum} does not explain why uncertainty should arise in the lightlike case. A proposed theory of superluminal observers aimed at combining relativity and quantum mechanics in the way proposed by \cite{dragan2020quantum} would have to include an additional mechanism that explains the uncertainty in the lightlike case, and only in that case so as to not render superluminal observers unnecessary. Hence, superluminal observers in the way advocated by \cite{dragan2020quantum} \textit{on its own} are sufficient at most to explain the uncertainty observed when it comes to massive particles. An additional mechanism would need to be identified for massless particles. If this can be done, the question then is whether this additional mechanism is not actually sufficient and it is not necessary to invoke D-E uncertainty anymore.\\

\subsubsection{Superdeterministic explanation}\label{supd}
In the superdeterministic explanation, the apparent sender–receiver structure of the protocol need not correspond to a genuine causal influence from $X$ to $B$. Instead, both variables may be jointly determined by a hidden variable, such that the correlations required by the protocol arise from pre-established constraints rather than dynamical signalling. Indeed, the cost is the absence of free variables, and in terms of the communication protol Alice's \emph{free} will to open (or not) open the gate. Here, we  give up the requirement of \emph{free} $X$  
such that for a non-local hidden variable $\lambda$, the probabilities $P_{\C{F}}(X \mid \lambda) \neq P_{\C{F}}(X)$. One possible way to diagrammatically represent is \cref{commoncause}. Note, the spread of $\lambda$ is not required to be contained in some region separate from $B, X$, it could also be also spread over the regions of $X,B$.

\begin{equation}
		\centering
		\includegraphics[width= 1 \textwidth]{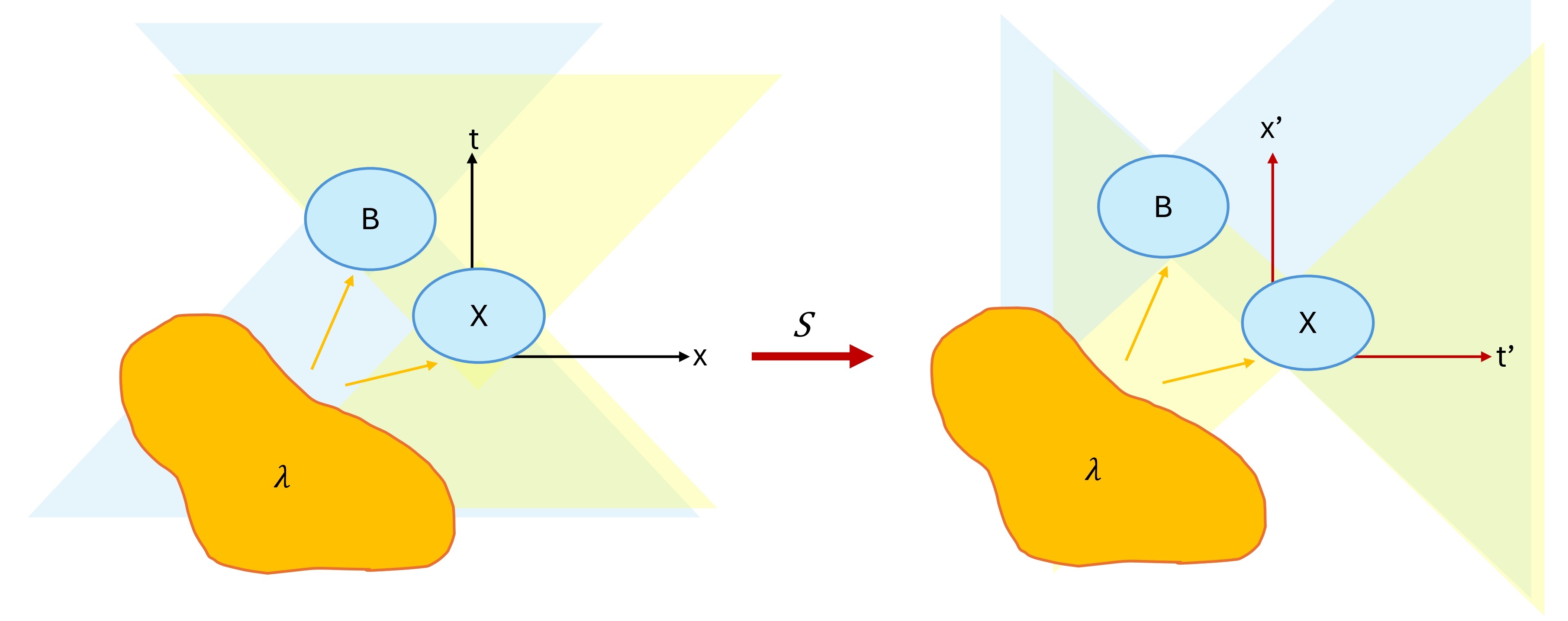}	
		\label{commoncause}        
	\end{equation}

The superdeterministic explanation comes in two flavours, one where there is still an edge in the causal graph from $X$ to $B$ and one where there is no such edge. The latter is undesirable from the point of view of \cite{dragan2020quantum}. If there is no causal influence between observed variables in the first place, then it is not necessary to introduce any spatiotemporal uncertainty to remove superluminal influences. Of course, it could simply be that this is the way the world is as uncertainty is not excluded. But superluminal observers no longer necessarily force a rejection of local determinism, they merely accommodate this rejection. In fact, if this path is adopted, the Dragan–Ekert argument and the requirement of D-E uncertainty no longer apply, since there is no causal influence between observed variables in the first place, it is not necessary to introduce any spatiotemporal uncertainty to remove superluminal influences.\\ 

The superdeterministic explanation in which an edge from $X$ to $B$ remains may then appear to be consistent with \cite{dragan2020quantum}. However, we argue that this possibility is likely an artefact of the relatively weak way in which we formalise the absence of superluminal signalling in assumption A2, rather than a genuine limitation of the no-go argument. In particular, A2 only constrains signalling from free observed variables. Indeed, our framework does not even define when correlations involving two non-free observed variables constitute signalling. It is natural to require that non-free variables likewise cannot signal superluminally to other variables, once a well-defined notion of interventions and signalling for such variables is defined. Under such an extension, if the edge $X \xrightarrow{\C{F}}B$ is taken to imply signalling from $X$ to $B$ in reference frame $\C{F}$ (e.g., as discerned by intervening on the non-free $X$ and verifying correlation with $B$ in the post-intervention model \cite{vilasini2022general}), then this signalling must be superluminal whenever A1 holds. We therefore expect that superdeterministic explanations with an edge from $X$ to $B$ 
can be ruled out by extending our framework with a suitable notion of signalling between non-free variables, for example using the affects-relation framework of \cite{vilasini2022general}.\\

For ease of comparison, \Cref{tab:relaxing-assumptions} summarises how relaxing each assumption
blocks the proof, the corresponding conceptual cost, and the resulting
implications for D-E uncertainty. The entries identify logical routes
around the no-go result; they should not be interpreted as establishing
the consistency or physical viability of the resulting theories.
\begin{table*}[t]
\centering \footnotesize
\renewcommand{\arraystretch}{1.25}
\caption{Possible ways of avoiding the no-go result. Relaxing an
assumption removes the contradiction but does not, by itself, establish
that the resulting theory is physically consistent.}
\label{tab:relaxing-assumptions}

\setlength{\fboxsep}{5pt}   
\setlength{\fboxrule}{0.5pt} 

\fbox{%
\begin{minipage}{
  \dimexpr\linewidth-2\fboxsep-2\fboxrule\relax
}

\begin{tabularx}{\textwidth}{
    >{\raggedright\arraybackslash}p{2cm}
    >{\raggedright\arraybackslash}X
    >{\raggedright\arraybackslash}X
    >{\raggedright\arraybackslash}X}
\textbf{Relaxed assumption}
&
\textbf{How the no-go theorem is avoided}
&
\textbf{Main cost}
&
\textbf{Implication for D-E uncertainty}
\\
\addlinespace[5pt]
A0
&
The operational correlation between \(X\) and \(B\) is absent.
&
The theory does not allow correlations, making reasonable experiments impossible.
&
D-E uncertainty is unnecessary because the correlation giving rise to the signalling problem has been removed.
\\
\addlinespace[5pt]
A1
&
Exclude non-order-preserving transformations or require operational regions to remain ordered.
&
The theory either excludes SpTs or requires sufficiently extended operational regions.
&
D–E uncertainty must be so large that even ordinary subluminal communication requires Bob’s operational region to extend far beyond normal experimental scales. 
\\
\addlinespace[5pt]
A2
&
Signalling is permitted even when the operational regions are not
causally ordered.
&
Rejecting a foundational principle and potentially violating principle of relativity.
&
D-E uncertainty is no longer required to prevent superluminal
signalling, since such signalling is accepted.
\\
\addlinespace[5pt]
A3
&
The correlation present in one frame need not persist in another.
&
Frame-dependent operational statistics and violation of principle of relativity.
&
D-E uncertainty may be unnecessary because the transformed protocol
need not retain the original correlation.
\\
\addlinespace[5pt]
A4
&
Which variable is free depends on the frame or free variables do not exist
&
Input-output roles become frame dependent or lack of free choice, i.e., superdeterminism.
&
D-E uncertainty is unnecessary for lightlike communication under a sender--receiver swap as well as in a superdeterministic account, where the variables are predetermined.

\\

\end{tabularx}
\end{minipage}%
}
\end{table*}

\section{Conclusion}\label{sec8}

In this work, we investigated the causal implications of extending the principle of relativity to include superluminal reference-frame transformations. The proposal in \cite{dragan2020quantum} shows that tachyons are incompatible with a locally deterministic theory, but it does not establish that a theory containing tachyons becomes consistent once local determinism is abandoned. If we recall Dragan-Ekert's argument discussed in \cref{sec5}, their reasoning begins by assuming no local description (the past lightcone not having information about emission and absorption in the presence of superluminal reference frames). They then claim that it is necessary to abandon local- \emph{determinism} in the presence of superluminal particles, given that there is no-superluminal signalling since we cannot \emph{control} the emission and absorption of tachyons or superluminal particles. To us, this reasoning appears circular and lacks the formal definitions and framework needed to clarify the relations between locality, determinism, operational control, and signalling.\\

Dragan and Ekert appeal to a fundamental form of uncertainty (when they abandon local-determinism), which we call D-E uncertainty and formalise by embedding relevant events in spacetime regions. In this work, we developed a framework in which both the spacetime  and the causal model are allowed to depend on the reference frame and communication protocols such as the one considered in \cite{dragan2020quantum} can be analysed explicitly.\\ 

Within this framework, we proved a no-go theorem demonstrating that the existence of correlations in any reference frame (A0), the existence of superluminal transformations which can change the causal ordering of operational regions (A1), the absence of superluminal signalling (A2), invariance of operational probabilities under reference-frame transformations (A3), and invariance of free variables across reference frames (A4) are inconsistent. The theorem therefore provides a systematic characterisation of the conceptual trade-offs required by theories admitting superluminal reference frames. The theorem identifies this incompatibility but does not determine which assumption should be abandoned or whether every resulting alternative defines a complete and physically satisfactory theory.\\ 

Our result also clarifies the limitations of arguments in \cite{dragan2020quantum} based solely on objective uncertainty. As discussed in \cref{notA1}, D-E uncertainty can evade the no-go theorem only if it sufficiently large that the operational regions remain order-preserving under all admissible reference-frame transformations. This is a substantially stronger requirement than merely postulating indeterminacy, and may imply that agents cannot, even in principle, localise or control communication events and even standard classical scales become unachievable. Thus, the first issue is not whether uncertainty exists, but whether it is sufficiently extensive to preserve the causal ordering of operational regions in every reference frame. Second, even if one accepts that D-E uncertainty may provide one possible way of addressing certain causal paradoxes associated with SpTs, it is neither shown to be sufficient on its own nor uniquely required by the existence of such transformations. Relaxing other assumptions opens alternative routes around the no-go result, which do not necessarily require D-E uncertainty, although their physical viability must be assessed separately. Finally, we also note, D-E uncertainty is specifically a form of spatiotemporal uncertainty concerning the localisation of events in space and time, whereas quantum uncertainty is more general and can apply to other incompatible observables, for example position and momentum.\\

Our result does not rule out all theories admitting superluminal reference frames; rather, it constrains theories that also seek to preserve the operational principles represented by A0–A4. Our work also demonstrates how causal modelling provides a useful framework for analysing extensions of relativity. By making explicit the assumptions underlying different proposals, it becomes possible to identify precisely where inconsistencies arise and to compare alternative approaches on equal footing. We hope that this framework will prove useful in future investigations of superluminal reference frames, relativistic causality, and more general extensions of spacetime symmetry.\\

\section{Discussion}\label{sec9}

The significance of the present result extends beyond the particular communication protocol considered here. Rather than proposing a specific ontology for superluminal reference frames, our theorem identifies the minimal set of assumptions that become mutually incompatible once non-order-preserving reference-frame transformations are admitted.\\

No-go theorems are particularly important when extending physical theories into regimes where direct experimental tests are difficult or currently impossible, since they help identify which combinations of assumptions remain logically consistent and which must be abandoned even in the absence of immediate empirical verification. Even if one does not see the value in a theory of superluminal extension, it is still interesting to see what a theory involving them brings to the table. Additionally, it lays the groundwork for refining and redefining the fundamental assumptions of successful physical theories, such as special relativity. The present no-go theorem complements the earlier no-go result by one of us (AS) in \cite{Sen2026Indeterminism}. 
The first no-go theorem \cite{Sen2026Indeterminism} talks about the evolution and transformation of a finite informational causal model in the presence of non-order-preserving superluminal transformations, and examines the conditions under which indeterminism can arise. Here, by contrast, we begin with an operational correlation between two observed variables in a communication protocol and ask whether introducing D-E uncertainty is sufficient to prevent that correlation from becoming a superluminal signal in another reference frame.\\

Our theorem highlights an intimate connection between relativistic symmetry and causal structure. If superluminal reference frame transformations fail to preserve the ordering of operational regions, then maintaining the remaining operational principles inevitably requires modifying some aspect of our conventional understanding of causality, observer independence, or free choice. The theorem does not select a preferred modification, nor does it assume that superluminal reference frames are physically realizable. It does not determine which modification Nature (if one believes superluminal observers should be physical) or toy theories with superluminal observers (if one believes otherwise) adopts, but it sharply constrains the possibilities.\\

One might also be tempted to compare this with Sorkin’s paradox in relativistic quantum field theory \cite{sorkin1993impossible}. However, Sorkin’s construction requires three agents and relies on a theory-dependent quantum (field) theoretic description of measurements to observe signalling correlations between spacelike separated regions.  A similar construction in quantum information was done in \cite{PhysRevD.24.359} by Aharonov and Albert. Also note, there are no superluminal frames there, only superluminal signalling via measurement protocols. 
In contrast, our no-go theorem is formulated independently of any particular physical theory or measurement prescription and requires only two parties, demonstrating a more fundamental obstruction.\\ 

Finally, recent work has extended SpTs to quantum reference-frame settings \cite{sen2026superluminal}. While the present framework already accommodates cyclic information-theoretic causal models, it assumes that the spacetime causal structure relative to each reference frame is described by a definite partial order capturing a definite lightcone structure, even if this is frame-dependent \cite{vilasini2022general, Vilasini2024Embedding}. 
It would therefore be interesting to investigate whether our no-go theorem extends to settings in which the spacetime causal structure itself becomes indefinite, such as in superpositions of geometries or quantum reference frames. Extending the present no-go theorem to such settings may provide further insight into the relationship between relativistic symmetry, causality, and quantum theory.\\

{\bf Acknowledgements.} AS acknowledges support from the IRA Programme, project no. FENG.02.01-IP.05 0006/23, financed by the FENG program 2021-2027, Priority FENG.02, Measure FENG.02.01., with the support of the FNP. MS is supported by National Science Centre, Poland (Preludium project, Classicality and compositionality of general probabilistic theories in infinite dimensions, project no.2024/53/N/ST2/01192 and Opus
project, Categorical Foundations of the Non-Classicality of Nature, project no. 2021/41/B/ST2/03149). VV acknowledges support from the PEPR integrated project EPiQ ANR-22-PETQ-0007 as part of Plan France 2030.

--------------------------------------------------------------------

\bibliography{newbib}
\end{document}